\documentclass[envcountsame,envcountreset,envcountsect]{llncs}
\usepackage{graphicx}

\usepackage{volker}
\usepackage{amsmath} 
\usepackage{amssymb} 
\usepackage{pst-node}

\usepackage{enumerate,multicol,theorem}

\newcommand{\ignore}[1]{}
\newcommand{\toappendix}[1]{\marginpar{!}{\scriptsize }}


\spnewtheorem{mydefinition}[theorem]{Definition}{\bfseries}{}

\begin{document}

\title{On the Complexity of Branching-Time Logics\thanks{Supported by
    DFG grant SCHW 678/4-1. An extended abstract of
this paper will appear in the proceedings of {\em Computer Science
  Logic (CSL) 2009}.}} 
\author{Volker Weber}
\institute{Fakult\"at f\"ur Informatik, Technische Universit\"at Dortmund\\
44221 Dortmund, Germany\\
\email{\texttt{volker.weber@tu-dortmund.de}}. }
\maketitle

\begin{abstract}
We classify the complexity of the satisfiability problem for extensions of \ctl and \ub. The extensions we consider are Boolean combinations of path formulas, fairness properties, past modalities, and forgettable past.
Our main result shows that satisfiability for \ctl with all these extensions is still in \TWOEXPTIME, which strongly contrasts with the nonelementary complexity of \ctlstar with forgettable past.
We give a complete classification of combinations of these extensions, yielding a dichotomy between extensions with \TWOEXPTIME-complete and those with \EXPTIME-complete complexity. In particular, we show that satisfiability for the extension of \ub with forgettable past is complete for \TWOEXPTIME, contradicting a claim for a stronger logic in the literature. The upper bounds are established with the help of a new kind of pebble automata.\\[2mm]\textbf{Keywords.} branching-time logic, \ctl, complexity of satisfiability, pebble automata, alternating tree automata, forgettable past
\end{abstract}

\section{Introduction}

Branching-time logics like \ctl are an important framework for the specification and verification of concurrent and reactive systems \cite{Emerson90,GrumbergV08,BaierK08}. Their history reaches almost thirty years back, when Lamport discussed the differences between linear-time and branching-time semantics of temporal logics in 1980~\cite{Lamport80}.
The first branching-time logic, called \ub, was proposed the year after by \mbox{Ben-Ari}, Pnueli, and Manna, introducing the concept of existential and universal path quantification \cite{Ben-AriPM83}. By extending \ub with the ``until'' modality, Clarke and Emerson obtained the computational tree logic \ctl \cite{ClarkeE81}, the up to date predominant branching-time logic.

Since then, many extensions of these logics have been considered. Some of these extensions aimed at more expressive power, others were introduced with the intention to make specification easier. In this paper, we consider four of these extensions that have been discussed at length in the literature, namely Boolean combinations of path formulas, fairness, past modalities, and forgettable past.

Combining these extensions, we obtain a wealth of branching-time logics. Many of the logics have been studied for their expressive power, the complexity of their satisfiability and model checking problems, and for optimal model checking algorithms. Nevertheless, for most of these logics the picture is still incomplete.

In this work, we complete the picture for the complexity of the satisfiability problem. Concretely, we completely classify the complexity of satisfiability for all branching-time logics obtained from \ub and \ctl by any combination of the extensions listed above.

Let us take a look at those parts of the picture that are already
there. The classical results in the area are the proofs of
\EXPTIME-completeness for satisfiability of \ub \cite{Ben-AriPM83} and
\ctl \cite{EmersonH85}. In the following paragraphs, we review known results for the extensions we consider.\\[1mm]
\textit{Boolean Combinations of Path Formulas.} Both, \ub and \ctl, require that every temporal operator is immediately preceded by a path quantifier. Emerson and Halpern were the first to study a logic that also allows Boolean combinations of temporal operators, i.e., of path formulas, as in $\myE(\myF p\land\neg\myF q)$ \cite{EmersonH85}. They called these logics \ubplus and \ctlplus and obtained the following hierarchy on their expressive power: $\ub\lex\ubplus\lex\ctl\eex\ctlplus$.
Concerning complexity\footnote{The \emph{complexity of a logic} always refers to the complexity of its satisfiability problem.}, \ctlplus has been shown to be complete for \TWOEXPTIME by Johannsen and Lange \cite{JohannsenL03}. The precise complexity of \ubplus is unknown.\\[1mm]
\textit{Fairness.} \ctl cannot express fairness properties, e.g., that there exists a path on which a proposition $p$ holds infinitely often. Therefore, Emerson et al. introduced \ectl by extending \ctl with a new temporal operator $\Finf$, such that $\efinf p$ expresses the property above. The logic combining \ectl with the extension discussed before, \ectlplus, roughly corresponds to the logic \ctf of \cite{EmersonC80}.

The logic \ctlstar of Emerson and Halpern extends \ectlplus with nesting of temporal operators as in $\eg(p\lor\myX p)$ \cite{EmersonH86}. Satisfiability for \ctlstar, and therefore for \ectlplus, is \TWOEXPTIME-complete \cite{VardiS85,EmersonJ99}.\\[1mm]
\textit{Past Modalities.} While being common in linguistics and philosophy, past modalities are mostly viewed only as means to make specification more intuitive in computer science. For a discussion of this issue and of the possible different semantics of past modalities, we refer to \cite{KupfermanP95,LaroussinieS00}. We adopt the view of a linear, finite, and cumulative past, which is reflected in our definition of semantics of branching-time logics based on computation trees.

We use \pctl to refer to the extension of \ctl with the past counterparts of the \ctl temporal operators, and likewise for other logics. While \pctl is strictly more expressive than \ctl \cite{KupfermanP95}, this is not the case for \pctlstar and \ctlstar \cite{HaferT87,LaroussinieS95}. In both cases, past modalities do not increase the complexity: \pctl is \EXPTIME-complete \cite{KupfermanP95} and \pctlstar has recently been shown to be \TWOEXPTIME-complete by Bozzelli \cite{Bozzelli08}.\\[1mm]
\textit{Forgettable Past.} Once past modalities are available, restricting their scope is a natural way to facilitate their use in specification. To this end, Laroussinie and Schnoebelen introduced a new operator $\myN$ for ``from now on'' to forget about the past \cite{LaroussinieS95}. I.e., past modalities in the scope of a $\myN$-operator do not reach further back than the point where the $\myN$-operator was applied. For results on the expressive power of this operator, see \cite{LaroussinieS95}.

Satisfiability for the extension of \pctl with the $\myN$-operator, \pctln, was claimed to be in \EXPTIME by Laroussinie and Schnoebelen \cite{LaroussinieS00}. In contrast to this, a nonelementary lower bound for \pctlstarn was shown in \cite{Weber09}. Nevertheless, the latter logic is known to be no more expressive than \ctlstar \cite{LaroussinieS95}.

 The logic \pectlplusn, i.e., \ctl with all the extensions considered here, also has the same expressive power as \ctlstar \cite{LaroussinieS95}. But the proof uses the separation result of Gabbay for liner temporal logic \cite{Gabbay89}, causing a nonelementary blow-up. No elementary upper bound for the complexity of satisfiability for \pectlplusn is known so far.\\[1mm]
\indent In this paper, we completely classify the complexity of satisfiability for all branching-time logics obtained from \ub and \ctl by combination of the extensions discussed above. In detail, we obtain the following results:
\begin{itemize}
	\item We show that satisfiability for all of these logics that allow Boolean combinations of path formulas is \TWOEXPTIME-complete, improving the known lower bound for \ctlplus to \ubplus.
	\item Likewise, we show that all logics with forgettable past are \TWOEXPTIME-hard, even if only the past modality $\myP$ for ``somewhere in the past'' is allowed. This contradicts the claim of Laroussinie and Schnoebelen of \EXPTIME membership for \pctln in \cite{LaroussinieS00}.
	\item We show that all logics that include neither Boolean combinations of path formulas nor forgettable past are in \EXPTIME.
	\item Finally, we show a \TWOEXPTIME upper bound for \pectlplusn, i.e.,  for \ctl with all the considered extensions. This strongly contrasts but does not contradict the nonelementary complexity of \pctlstarn, although both logics are equally expressive.
\end{itemize}

The upper bounds are obtained by translation into alternating tree automata \cite{Vardi95}. To prove the upper bound for \pectlplusn, we introduce the model of\linebreak[4] \emph{$k$-weak-pebble hesitant alternating tree automata} and show that their nonemptiness problem is in \TWOEXPTIME. These automata differ from the one-pebble alternating B\"uchi tree automata of \cite{Weber09} in two respects. First, they use a different acceptance condition to handle fairness, as proposed by Kupferman, Vardi, and Wolper for an automata model for \ctlstar model checking \cite{KupfermanVW00}. Second, the model allows more than one pebble, but only of a \emph{weak} kind. These pebbles are used to handle forgettable past and Boolean combinations of path formulas.

\subsection*{Note}
Tragically, Volker Weber died in the night of April 6-7, 2009, right after
submitting this paper to CSL 2009. An obituary is printed in the proceedings. Most of the remarks by the reviewers were incorporated by his
advisor, Thomas Schwentick. However, major changes were avoided (and
had not been requested by the reviewers). The
help of the reviewers is gratefully acknowledged.

\section{Preliminaries}

This section contains the definitions of branching-time logics and tree automata. Both are with respect to infinite trees, which we are going to define first.

A \emph{tree} is a set $T\subseteq \mathbb{N}^*$ such that if $x\cdot
c\in T$ with $x\in \mathbb{N}^*$ and $c\in \mathbb{N}$, then $x\in T$
and $x\cdot c'\in T$ for all $0<c'<c$. The empty string $\varepsilon$
is the \emph{root} of $T$ and  for all $c\in\mathbb{N}$, $x\cdot c\in
T$ is called a \emph{child} of $x$. The parent of a node $x$ is
sometimes denoted by $x\cdot-1$. We use $T_x:=\{y\in\mathbb{N}^*\mid x\cdot y\in T\}$ to denote the \emph{subtree} rooted at the node $x\in T$.
The branching degree $\degree(x)$ is the number of children of a node
$x$. Given a set $D\subseteq\mathbb{N}$, a $D$-tree is a computation
tree such that $\degree(x)\in D$ for all nodes $x$.

A \emph{path} $\pi$ in $T$ is a prefix-closed minimal set $\pi\subseteq T$, such that for every $x\in\pi$, either $x$ has no child or there is a unique $c\in \mathbb{N}$ with $x\cdot c\in\pi$. We use ``$\leq$'' (``$<$'') to denote the (strict) ancestor-relation on $T$.

A \emph{labeled tree} $(T,V)$ over a finite alphabet $\Sigma$ consists of a tree $T$ and a labeling function $V:T\rightarrow\Sigma$, assigning a symbol from $\Sigma$ to every node of $T$. We are mainly interested in the case where $\Sigma=2^{\prop}$ for some set $\prop$ of propositions. Such \emph{computation trees} result from the unfolding of Kripke structures. In the following, we consider only computation trees and refer to them as trees. We identify $(T,V)$ with $T$.

\subsection{Branching-Time Logics}

We shortly define the branching-time logics we are going to study. These definitions are mainly standard.

We start by defining the logic incorporating all the extensions discussed in the introduction. The \emph{state formulas} $\varphi$ and \emph{path formulas} $\psi$ of \pectlplusn are given by the following rules:
\begin{align*}
 \varphi & ::= p \mid \varphi\land\varphi \mid \neg\varphi \mid \myE\psi \mid \myN\varphi\\
 \psi & ::= \varphi \mid \psi\land\psi \mid \neg\psi \mid \myX\varphi \mid \varphi\myU\varphi \mid \Finf\varphi \mid \myY\varphi \mid \varphi\myS\varphi
\end{align*}
where $p\in\prop$ for some set of propositional symbols $\prop$. \pectlplusn is the set of all state formulas generated by these rules.

We use the usual abbreviations $\true,\false,\varphi\lor\varphi,\varphi\rightarrow\varphi,\varphi\leftrightarrow\varphi$, and
\begin{align*}
 \myA\psi & := \neg\myE\neg\psi & \myF\varphi & := \true\myU\varphi & \myG\varphi & := \neg\myF\neg\varphi\\
 \Ginf\varphi & := \neg\Finf\neg\varphi & \myP\varphi & := \true\myS\varphi & \myH\varphi & := \neg\myP\neg\varphi
\end{align*}

The semantics of \pectlplusn is defined with respect to a computation tree $T$, a node $x\in T$, and, in case of a path formula, a path $\pi$ in $T$ starting at the root of $T$. We omit the rules for propositions and Boolean connectives.
\begin{alignat*}{2}
 T,x & \models \myE\psi &\;&\text{iff there exists a path }\pi\text{ in }T\text{, such that }x\in\pi\text{ and }T,\pi,x\models\psi \displaybreak[1]\\
 T,x & \models \myN\varphi &&\text{iff }T_x,\varepsilon\models\varphi \displaybreak[1]\\
 T,\pi,x & \models \varphi &&\text{for a state formula }\varphi\text{, iff }  T,x\models\varphi\displaybreak[1]\\
 T,\pi,x & \models \myX\varphi &&\text{iff } T,\pi,x\cdot c\models\varphi\text{, where }c\in D\text{ and }x\cdot c\in\pi\displaybreak[1]\\
 T,\pi,x & \models \varphi_1\myU\varphi_2 &&\text{iff there is a node }y\geq x \text{ in }\pi\text{, such that }T,\pi,y\models\varphi_2\\
 & && \quad\;\text{and for all }x\leq z<y\text{ we have }T,\pi,z\models\varphi_1  \displaybreak[1]\\
 T,\pi,x & \models \Finf\varphi &&\text{iff there are infinitely many nodes }y\in\pi\text{ such that }T,\pi,y\models\varphi  \displaybreak[1]\\
 T,\pi,x & \models \myY\varphi &&\text{iff }x\neq\varepsilon\text{ and }T,\pi,x\cdot -1\models\varphi  \displaybreak[1]\\
 T,\pi,x & \models \varphi_1\myS\varphi_2 &&\text{iff there is a node }y\leq x \text{ in }\pi\text{, such that }T,\pi,y\models\varphi_2\\
 & && \quad\;\text{and for all }y<z\leq x\text{ we have }T,\pi,z\models\varphi_1  
\end{alignat*}

A formula $\varphi$ is called \emph{satisfiable} if there is a tree $T$ such that $T,x\models\varphi$.

All other logics we consider are syntactical fragments of \pectlplusn.
\begin{alignat*}{2}
 & \ubplus\quad & \varphi & := p \mid \varphi\land\varphi \mid \neg\varphi \mid \myE\psi \\
 & & \psi & := \varphi \mid \psi\land\psi \mid \neg\psi \mid \myX\varphi \mid \myF\varphi \\
 & \ubpn\; & \varphi & := p \mid \varphi\land\varphi \mid \neg\varphi \mid \ex\varphi \mid \ef\varphi \mid \af\varphi \mid \myP\varphi \mid \myN\varphi\\
 & \pectl & \varphi & := p \mid \varphi\land\varphi \mid \neg\varphi \mid \ex\varphi \mid \myE(\varphi\myU\varphi) \mid \myA(\varphi\myU\varphi)\mid \efinf\varphi \mid \myY\varphi \mid \varphi\myS\varphi\\
\end{alignat*}

\subsection{Weak-Pebble Automata}

We introduce alternating tree automata equipped with a weak kind of pebbles. We call these pebbles \emph{weak} 
 as they can only be used to mark a node while the automaton inspects the subtree below\footnote{A similar restriction on pebbles was considered in \cite{CateS08}.}. In particular, a weak-pebble automaton can only see the last pebble it dropped.

For a given set $X$, we use $\mathcal{B}^+(X)$ to denote the set of
\emph{positive Boolean formulas over $X$}, i.e., formulas built from
\true, \false and the elements of $X$ by $\wedge$ and $\vee$. A subset $Y\subseteq X$ \emph{satisfies} a boolean formula $\alpha\in\mathcal{B}^+(X)$ if and only if  assigning \true to the elements in $Y$ and \false to the elements in $X\setminus Y$ makes $\alpha$ true.

\begin{mydefinition}\label{def:kWPAA}
A \emph{$k$-weak-pebble alternating tree automaton} (\kWPAA) is a tuple $A=(Q,\Sigma,D,q^0,\delta,F)$, such that $Q$ is a finite set of states, $\Sigma$ is a finite alphabet, $D$ a finite set of arities, $q^0\in Q$ is the initial state, $F$ is an acceptance condition, and $\delta$ is a transition function 
$$\delta:Q\times\Sigma\times D\times\mathbb{B}\rightarrow  (\{\text{drop,lift}\}\times Q)\cup\mathcal{B}^+((D\cup\{-1,0,\text{root}\})\times Q)$$
such that $\delta(q,\sigma,d,\false)\neq (\text{lift},p)$, no Boolean
combination $\delta(q,\sigma,d,b)$ contains any $(d',p)$ with $d'\in D$ and $d'>d$, and no Boolean
combination  $\delta(q,\sigma,d,\true)$ contains any $(-1,p)$.
\end{mydefinition}

In the following definition of the semantics of a \kWPAA, we will use tuples $\bar{y}=(y_1,\ldots,y_k)\in(D^*\cup\{\pnp\})^k$, called pebble placements, to denote the positions of the pebbles, where ``$\pnp$'' means that the pebble is not placed. As \kWPAA will be restricted to use their pebbles in a stack-wise fashion, pebble 1 being the first pebble to be placed, there will always be an $i\in[1,k]$, such that $y_j\not=\pnp$ for all $j\leq i$ and $y_j=\pnp$ for all $j>i$. I.e., $i$ is the maximal index of a placed pebble and we will refer to it by $\mpp{\bar{y}}$. Note that $\mpp{\bar{y}}=0$ if and only if no pebble is placed and that $y_{\mpp{\bar{y}}}$ is the position of the last placed pebble otherwise.

\begin{mydefinition}
A {\em configuration} $(q,x,\bar{y})\in Q\times D^*\times(D^*\cup\{\pnp\})^k$ of a \kWPAA $A=(Q,\Sigma,D,q^0,\delta,F)$
consists of a state, the current position in the tree, and the positions of
the pebbles.

A {\em run} $r$ of $A$ on a $\Sigma$-labeled $D$-tree $(T,V)$ is a
$\mathbb{N}$-tree $(T',V')$, whose nodes are labeled by configurations
of $A$ and that is compatible with the transition function. More
precisely, the root of $T'$ must be labeled by
$(q^0,\varepsilon,\bar{y})$ with $\mpp{\bar{y}}=0$, and for every node
$v\in T'$ labeled by $(q,x,\bar{y})$ the following conditions
depending on $\delta$ hold, where $d:=\degree{x}$ and $b=\true$ if and
only if $y_{\mpp{\bar{y}}}=x$.
\begin{description}
	\item $\delta(q,V(x),d,b)=(\text{drop},p)$: If $\mpp{\bar{y}}<k$, then $v$ has a child labeled with $(q,x,\bar{y}')$, where $y_{\mpp{\bar{y}}+1}'=x$ and $y_j'=y_j$ for all $j\neq\mpp{\bar{y}}+1$. Otherwise, i.e., if all pebbles are already placed, the transition cannot be applied.
	\item $\delta(q,V(x),d,b)=(\text{lift},p)$: By Definition \ref{def:kWPAA}, $b=\true$ and therefore $y_{\mpp{\bar{y}}}=x$. Then $v$ has a child labeled with $(q,x,\bar{y}')$, where $y_{\mpp{\bar{y}}}'=\pnp$ and $y_j'=y_j$ for all $j\neq\mpp{\bar{y}}$. 
	\item $\delta(q,V(x),d,b)=\alpha$ for a boolean combination
          $\alpha\in\mathcal{B}^+((D\cup\{-1,0,\text{root}\})\times
          Q)$: There has to be a set $Y\subseteq(D\cup\{-1,0,\text{root}\})\times Q$, such that $Y$ satisfies $\alpha$,
 and, for every $(c,p)\in Y$, there is a child $v\cdot c$ of $v$ in
 $T'$ such that $v\cdot c$ is labeled by $(p,x\cdot c,\bar{y})$, where
 $x\cdot 0$ and $x\cdot \text{root}$ denote the node $x$
 itself. Additionally, we require that $Y$ does not contain a tuple
 $(-1,q)$ if $x=\varepsilon$ and that $Y$ contains a tuple
 $(\text{root},q)$ only if  $x=\varepsilon$. 
\end{description}

We call a run $r$ \emph{accepting} if every infinite path of $r$
satisfies the acceptance condition and every finite path ends in a
configuration where a transition to the Boolean combination \true applies. A labeled tree $(T,V)$ is \emph{accepted} by $A$ if and only if there is an accepting run of $A$ on $(T,V)$. The \emph{language} of $A$ is the set of trees accepted by $A$ and denoted $L(A)$.
\end{mydefinition}

\begin{mydefinition}\label{def:symkWPAA}
A \emph{symmetric $k$-weak-pebble alternating tree automaton} is a tuple $A=(Q,\Sigma,q^0,\delta,F)$, such that $Q$ is a finite set of states, $\Sigma$ is a finite alphabet, $q^0\in Q$ is the initial state, $F$ is an acceptance condition, and 
$$\delta:Q\times\Sigma\times\mathbb{B}\rightarrow(\{\text{drop,lift}\}\times Q)\cup\mathcal{B}^+((\{\mybox,\mydiamond,-1,0,\text{root}\})\times Q)$$
is a transition function such that $\delta(q,\sigma,\false)\neq (\text{lift},p)$ and $\delta(q,\sigma,\true)$ does not contain any $(-1,p)$.

The semantics of a symmetric \kWPAA are defined as for (nonsymmetric) \kWPAA, except for the last case, $\delta(q,V(x),d,b)=\alpha$, where we require that
 for every tuple $(\mydiamond,p)\in Y$ there is child of $v$ in $T'$ labeled by $(p,x\cdot c)$ for some child $x\cdot c$ of $x$ in $T$, and
 for every tuple $(\mybox,p)\in Y$ and every child $x\cdot c$ of $x$ in $T$, there is child of $v$ in $T'$ labeled by $(p,x\cdot c)$.
The conditions on tuples $(-1,p)$, $(0,p)$, and $(\text{root},p)$ remain unchanged.
\end{mydefinition}

So far, we have not specified the acceptance conditions for our automata. 
For our purposes, hesitant alternating tree automata as introduced by Kupferman, Vardi, and Wolper are a good choice \cite{KupfermanVW00}. They allow for an easy translation from \ctlstar \cite{KupfermanVW00} and have proved useful in studies of \ctlstar with past \cite{KupfermanV06,Bozzelli08}.

A \emph{$k$-weak-pebble hesitant alternating tree automaton} (\kWPHAA for short) $A=(Q,\Sigma,D,q^0,\delta,F)$ is a \kWPAA with $F=\langle G,B\rangle$, $G,B\subseteq Q$, that satisfies the following conditions:
\begin{itemize}
	\item There exists a partition of $Q$ into disjoint sets $Q_1,\ldots,Q_m$ and every set $Q_i$ is classified as either \emph{existential}, \emph{universal}, or \emph{transient}.
	\item There exists a partial order $\leq$ between the sets $Q_i$, such that every transition from a state in $Q_i$ leads to states contained either in the same set $Q_i$ or in a set $Q_j$ with $Q_j<Q_i$.
	\item If $q\in Q_i$ for a transient set $Q_i$, then $\delta(q,\sigma,d,b)$ contains no state from $Q_i$.
	\item If $Q_i$ is an existential set, $q\in Q_i$ and $\delta(q,\sigma,d,b)=\alpha$, then $\alpha$ contains only disjunctively related tuples with states from $Q_i$.
	\item If $Q_i$ is a universal set, $q\in Q_i$ and $\delta(q,\sigma,d,b)=\alpha$, then $\alpha$ contains only conjunctively related tuples with states from $Q_i$.
\end{itemize}

Every infinite path $\pi$ in a run of a \kWPHAA gets trapped in an existential or a universal set $Q_i$. The acceptance condition $\langle G,B\rangle$ is satisfied by $\pi$, if either $Q_i$ is existential and $inf(\pi)\cap G\neq\emptyset$, or $Q_i$ is universal and $inf(\pi)\cap B=\emptyset$, where $inf(\pi)$ denotes the set of states that occur infinitely often on $\pi$.

We also consider \emph{symmetric} \kWPHAA. Here, we additionally require that for every existential (resp. universal) set $Q_i$ and every state $q\in Q_i$, $\delta(q,\sigma,d,b)$ does not contain a tuple $(\Box,p)$ (resp. $(\Diamond,p)$) with $p\in Q_i$.

The size of an automaton is defined as the sum of the sizes of its components. Note that this includes the size of $D$ in the case of nonsymmetric automata.

If we remove the pebbles from our automata, we obtain \emph{(symmetric) two-way hesitant alternating tree automata}. Such an automaton has a transitions function of the form
$\delta:Q\times\Sigma\rightarrow \mathcal{B}^+((\{\mybox,\mydiamond,-1,0,\text{root}\})\times Q)$
in the symmetric case and is obtained from the above definitions in a straightforward way.

Symmetric two-way \HAA have been used by Bozzelli to prove membership in \TWOEXPTIME for \ctlstar with past \cite{Bozzelli08}. Opposed to the definition given there, we do not enforce that infinite paths in a run move only downward in the tree from a certain point on. Therefore, our results on symmetric two-way \HAA are not implied by \cite{Bozzelli08}. 
 Nevertheless, the restricted version of \cite{Bozzelli08} would suffice to prove our results on the complexity of branching-time logics.

\section{Complexity of Satisfiability}\label{sec:comp}

We give a complete classification of the complexity of the satisfiability problem for all branching-time logics obtained from \ub and \ctl by any combination of the extensions discussed above. As our main theorem, we proof \TWOEXPTIME-completeness for all logics including forgettable past or Boolean combinations of path formulas.

\begin{theorem}\label{theo:pectlplusn}
The satisfiability problems for all branching-time logics that are a syntactical fragment of \pectlplusn and that syntactically contain \ubplus or \ubpn are complete for \TWOEXPTIME.
\end{theorem}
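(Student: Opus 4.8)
The statement packages an upper bound and a lower bound. Every logic named in the theorem is a syntactic fragment of $\pectlplusn$ and syntactically contains $\ubplus$ or $\ubpn$, so by syntactic monotonicity of satisfiability it suffices to prove (i) satisfiability of $\pectlplusn$ is in $\TWOEXPTIME$, and (ii) satisfiability of $\ubplus$ and of $\ubpn$ is already $\TWOEXPTIME$-hard.

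For (i) the plan is the automata-theoretic route. Given $\varphi\in\pectlplusn$, I would first note that a satisfiable $\varphi$ has a model whose branching degree is polynomial in $|\varphi|$ (one witnessing child per existentially quantified subformula), so a small arity set $D$ can be fixed, and then build a symmetric $\kWPHAA$ $A_\varphi$ over $D$ accepting exactly the computation trees that satisfy $\varphi$. The hesitant layering and the $\langle G,B\rangle$-condition are set up as in the standard translation of $\ctlstar$ into hesitant automata, with the fairness operator $\Finf$ absorbed into a suitable universal/existential layer. The pebbles carry precisely the two features a plain hesitant automaton cannot: a pebble is dropped to mark the current node whenever the automaton enters the argument of an $\myN$-operator, so that the past operators $\myY,\myS$ evaluated below see only the marked subtree; and pebbles are dropped to mark the node at which a path quantifier $\myE\psi$ with a Boolean combination $\psi$ of path formulas is applied, so that the several path-formula obligations — and the past operators inside them — are checked coherently while the automaton walks down the guessed path. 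Since the nesting of $\myN$ and of Boolean path quantifiers is bounded by $|\varphi|$, a number $k$ of pebbles polynomial in $|\varphi|$ suffices and $A_\varphi$ has size polynomial in $|\varphi|$; it then remains to invoke the $\TWOEXPTIME$ nonemptiness bound for $\kWPHAA$ established in the next section.

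For (ii) I would, for both $\ubplus$ and $\ubpn$, reduce from the word problem of alternating Turing machines with an exponential space bound, which is $\TWOEXPTIME$-complete. Given such a machine $M$ and input $w$ with $|w|=n$, I would construct $\varphi_{M,w}$ so that it is satisfiable iff $M$ accepts $w$: a model is a computation tree encoding the computation tree of $M$, a configuration being laid out as a block of $2^n$ consecutive nodes along a path — each node holding an $n$-bit cell address in propositions $b_1,\dots,b_n$ incremented along the block, plus the tape symbol at that cell and a marker for the scanned cell together with the control state — while the branching of the tree reflects the alternation of $M$ (one successor block at an existential configuration, one per applicable transition at a universal one). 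The formula conjoins: the address counter increments correctly and is $2^n$-periodic; the first block spells out the initial configuration on $w$; leaf blocks are accepting; and the crucial clause that each block is a correct $M$-successor of its predecessor, i.e. the data at the cell with address $a$ of a block are those prescribed by the transition relation from the cells with addresses $a-1,a,a+1$ of the preceding block, which sit $2^n$ nodes earlier on the path.

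The main obstacle — and the only non-routine part, as the counter, initialisation and acceptance gadgets are standard — is expressing this distance-$2^n$ consistency without nesting of temporal operators, in particular without $\myU$. For $\ubplus$ the plan is to adapt the $\ctlplus$ reduction of Johannsen and Lange \cite{JohannsenL03}: one transports a single bit a controlled distance down a path using Boolean combinations of path formulas of the shape $\myE\bigl(\myF(p\wedge\chi)\wedge\neg\myF q\wedge\cdots\bigr)$ and $\myA(\myF\cdots\rightarrow\myF\cdots)$ that guess a cell with a prescribed address, certify it is the first such cell, and read off its content; launching from every node a bundle of such one-shot tests over the branches of the tree lets them jointly pin down the whole successor configuration. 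The new work is to carry this out starting from $\ub$ rather than $\ctl$, i.e. with only $\myX,\myF$ and a single layer of path quantification, which also yields the promised improvement of the lower bound from $\ctlplus$ to $\ubplus$. For $\ubpn$ the plan is instead to exploit the past: at the first node of a configuration block assert an $\myN$-operator, re-rooting there so that subsequent uses of $\myP$ reach back only over a window spanning at most the two consecutive configurations involved; within that bounded window $\myP$ can locate the cell $2^n$ earlier by its address, the residual ambiguity being resolved by counting counter resets, with further $\myN$-nesting if needed. I expect this interplay of $\myN$ and $\myP$ to be the delicate point; it supplies exactly the expressive power $\myU$ would otherwise provide, and since it already works with $\myP$ as the only past modality it isolates the gap in the $\EXPTIME$ claim of Laroussinie and Schnoebelen \cite{LaroussinieS00}. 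Assembling (i) with the two hardness reductions in (ii) and the syntactic sandwiching gives $\TWOEXPTIME$-completeness throughout.
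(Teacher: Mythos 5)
Your overall architecture matches the paper's (hardness for \ubplus and \ubpn via alternating exponential-space computations, upper bound via pebble hesitant automata and their \TWOEXPTIME{} nonemptiness), but the upper bound as you describe it has a genuine gap at its central point: the treatment of a path quantifier applied to a Boolean combination of path formulas. You say the obligations are ``checked coherently while the automaton walks down the guessed path'', but an alternating automaton that splits into one copy per conjunct cannot force the copies to follow the same branch --- the $\mydiamond$-moves of different copies may choose different children --- and a weak pebble marks a single \emph{node}, not a path, so dropping a pebble at the quantification point does not solve this. The paper's proof of Lemma~\ref{lem:logictoautomata} is almost entirely about this synchronization problem and resolves it with two devices you do not have: (a) obligations witnessed by a finite prefix are checked while moving \emph{upward} from a guessed endpoint (the upward path in a tree is unique, so synchronization is free, and the pebble dropped at the quantification node bounds the upward walk, while still being liftable for past operators like $\myF\myP p$); (b) the remaining obligations, which in negation normal form are only of the shapes $\myG\chi$ and $\Finf\chi$, are synchronized by introducing a fresh proposition $p_\psi$ with the added conjunct $\ag(\neg p_\psi\lor\eg p_\psi)$ and checking them on all $p_\psi$-labeled suffixes. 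This yields only an \emph{equisatisfiable} automaton, not one ``accepting exactly the computation trees that satisfy $\varphi$'' as you promise; the paper explicitly gives up equivalence here. (Your other deviations on this side are tolerable rather than wrong: you use polynomially many pebbles where the paper first applies a renaming normal form to get down to two, which is acceptable given the paper's nonemptiness bound as long as $k$ stays below the number of states; and your asserted polynomial-branching model property is unproven but also unnecessary, since the paper bounds branching through the automaton.)

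On the lower bounds, your plans are in the same spirit as the paper's (which reduces from the $2^n$-corridor tiling game, the combinatorial form of your alternating exponential-space machines), but for \ubplus you stop exactly where the real work starts: you acknowledge that the Johannsen--Lange gadgets must be redone with only $\myX$ and $\myF$, yet give no mechanism for bounding the reach of $\myF$ so that a path inspects the next configuration block but not later ones --- without $\myU$, formulas like $\myF(p\wedge\chi)$ alone cannot certify ``the \emph{first} cell with this address''. The paper's device is a modulo-three numbering of rows with propositions $e_1,e_2,e_3$ (so that, e.g., $e_1\wedge\myF e_2\wedge\neg\myF e_3$ confines a path to the immediately following row) combined with copy-states from which the distant cell's address bits and tile can be read off; some such trick is indispensable and is missing from your proposal. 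Your \ubpn plan, by contrast, does capture the paper's key idea (use $\myN$ to cut off the past so that $\myP$-based address matching only spans two consecutive blocks), though the paper applies $\myN$ at the cell of the previous row and detects ``at most one row boundary above'' with a fixed $\neg\myP(q_\#\wedge\myP(\neg q_\#\wedge\myP q_\#))$ pattern rather than your vaguer ``counting counter resets, with further $\myN$-nesting if needed''.
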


The upper bound is proved in Section \ref{subsec:upperbound} with the help of weak-pebble alternating tree automata. There, we also show that satisfiability for all remaining logics is in \EXPTIME. 

The lower bounds on \ubpn and \ubplus are proved next.

\subsection{Lower Bounds}

 We obtain both results by reduction from the $2^n$-corridor tiling game, which is based on the \emph{$2^n$-corridor tiling problem}. An instance $I=(T,H,V,F,L,n)$ of
the latter problem consists of a finite set $T$ of tile types, horizontal and
vertical constraints $H,V\subseteq T\times T$, constraints $F,L\subseteq T$ on the first and the last row, and a number $n$ given
in unary. The task is to decide, whether $T$ tiles the
$2^n\times m$-corridor for some number $m$ of rows, respecting the
constraints.
We assume w.l.o.g. that there is always a possible next move for both players. 

The game version of this problem corresponds to alternating Turing Machines: The \emph{$2^n$-corridor tiling game} is played by two players $E$ and $A$ on an instance $I$ of the $2^n$-corridor tiling problem. The players alternately place tiles row by row starting with player $E$ and following the constraints, as the opponent wins otherwise. $E$ wins the game if a row consisting of tiles from $L$ is reached. To decide whether $E$ has a winning strategy in such a game is complete for \AEXPSPACE \cite{Chlebus86}, which is the same as \TWOEXPTIME.

\begin{proposition}\label{theo:ubpn}
Satisfiability for \ubpn is \TWOEXPTIME-hard.
\end{proposition}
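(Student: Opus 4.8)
The plan is to reduce from the $2^n$-corridor tiling game, which is complete for \TWOEXPTIME. From an instance $I=(T,H,V,F,L,n)$ I would construct, in time polynomial in $|I|$ (recall that $n$ is given in unary), a \ubpn formula $\varphi_I$ that is satisfiable if and only if player $E$ has a winning strategy on $I$. The models of $\varphi_I$ will be, essentially, the strategy trees of $E$: a tree along whose branches the tiles are listed cell by cell and row by row, in which a move of $E$ is witnessed by one child while the moves of $A$ are present as a complete set of children, and on which every branch reaches a row consisting of tiles from $L$. Throughout I may freely use $\myA\myX$, $\myA\myG$, $\myE\myG$ and $\myA\myF$, since all of them are definable in \ubpn from $\ex$, $\ef$ and $\af$ by dualization.

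\emph{Encoding the board and the local rules.} Every node of the tree carries a tile type, $n$ propositions encoding its column index $0\le j<2^n$ in binary, and a bit recording whose turn it is; in addition I insert a distinguished separator node in front of each new row, so that a branch reads $\#\,c^1_0\,c^1_1\cdots c^1_{2^n-1}\,\#\,c^2_0\cdots$, where $c^i_j$ is the cell in row $i$ and column $j$. Inspecting the immediate successors with $\ex$ and $\myA\myX$, a polynomial-size propositional formula enforces: the column counter starts at $0$ after each separator and is incremented modulo $2^n$ from cell to cell; a separator occurs exactly after a cell with counter $2^n-1$; the horizontal constraint $H$ holds between a cell and its in-row successor; the first row uses only tiles from $F$; the turn bit alternates from row to row; and every node that is not the last cell of an $L$-row has a successor. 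The game is reflected by treating the two turn values differently: in an $E$-row it suffices that \emph{some} successor carries a locally legal next cell ($\ex$), whereas in an $A$-row I require, via $\myA\myX$ together with a conjunction $\bigwedge_{t}\bigl(\text{``}t\text{ is legal here''}\rightarrow\ex(\text{a successor has tile }t)\bigr)$ over the polynomially many tile types, that \emph{every} legal move of $A$ is realized. Finally $\myA\myF(\text{``an }L\text{-row has just been completed''})$ expresses that $E$ wins on every play.

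\emph{The vertical constraint is the crux.} A cell $c^i_j$ must be $V$-compatible with the cell $c^{i-1}_j$ directly above it, which lies $2^n$ steps earlier on the branch, and the tile of $c^{i-1}_j$ is moreover needed to decide which moves of $A$ are legal in row $i$. Here I would use the forgettable-past operator: by asserting $\myA\myG(\ldots\rightarrow\myN\chi)$ at suitably chosen nodes I relativize to the subtree hanging below such a node, so that the cell whose downward neighbour I want to constrain becomes a distinguished point of the relativized structure (its root, or the first cell below the root). Inside $\chi$ the column counter together with $\myP$ and $\myH$ is then used to single out, along each path of the subtree, the first later cell bearing the \emph{same} column value: since every row is exactly one full cycle $0,1,\dots,2^n-1$ of the counter, "the next cell with the same column value" is precisely the cell $2^n$ steps ahead, i.e.\ the cell one row below; at that cell $\chi$ imposes $V$ and also makes its tile available to the legality test used for $A$'s moves. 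The delicate point — and the step I expect to be the main obstacle — is to make "next occurrence of the same column value, with no confusion between successive rows" expressible using only $\myP$, $\myH$ and $\myN$ and without a yesterday operator; this forces a careful choice of which nodes $\myN$ is applied to (the separators rather than the cells, or vice versa) and of auxiliary marker propositions, so that the cell directly above becomes referable by a past formula inside the current $\myN$-scope.

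\emph{Correctness and size.} If $E$ has a winning strategy, its induced play tree, decorated with counters, turn bits and separators, satisfies all conjuncts of $\varphi_I$. Conversely, from any model of $\varphi_I$ one reads off a strategy for $E$: the $E$-successors give $E$'s moves, the completeness of the $A$-successors guarantees that the strategy answers all of $A$'s moves, the local conjuncts guarantee that only constraint-respecting tilings occur, and $\myA\myF$ guarantees that every play reaches an $L$-row, i.e.\ that $E$ wins. Since $\varphi_I$ has size polynomial in $|I|$, this is a polynomial-time reduction, so satisfiability for \ubpn is \TWOEXPTIME-hard.
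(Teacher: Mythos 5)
Your proposal takes the same route as the paper: a polynomial-time reduction from the $2^n$-corridor tiling game, encoding $E$'s strategy trees branch-wise with a binary column counter, row separators, one child witnessing $E$'s move versus a complete set of children for $A$'s moves, and an eventuality forcing every branch to reach an $L$-row. That framework is sound and corresponds to the paper's $\varphi_s$, $\varphi_n$ and most of $\varphi_t$.

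However, the step you yourself flag as ``the main obstacle'' is precisely the one that carries the whole proof, and you leave it open: how to express in \ubpn{} (no $\myU$, no $\myY$, only $\myP/\myH$ and $\myN$) that a later cell on the branch is \emph{the} cell directly below the current one, and how to compare its tile with the tile above --- needed both for the vertical constraint and for deciding which moves of $A$ are legal. The paper closes this gap with two concrete devices. First, if $\myN$ is asserted at the upper cell $x$, then inside its scope $\myP\myH\chi$ holds at any descendant if and only if $\chi$ holds at $x$ itself (the root of the truncated past), so the $\myN$-point is directly referable from below; this is how $\bigwedge_{i=0}^{n-1}(q_i\leftrightarrow\myP\myH q_i)$ says ``same column as $x$'' and $\bigvee_{(t,t')\in V}\myP\myH p_{t'}$ compares tiles. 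Second, ``no confusion between successive rows'' is not handled by hunting for the next occurrence of the counter value, but by the guard $\vartheta=\neg q\land\neg q_{\#}\land\neg\myP(q_{\#}\land\myP(\neg q_{\#}\land\myP q_{\#}))$, which bounds the number of row separators in the truncated past; together with counter equality, and with $\ax\ag$ ranging over all descendants of the $\myN$-point, this pins down exactly the cell one row below $x$. The same $\myP\myH$ trick is what makes $A$'s legality test expressible: a move of $A$ need not be represented only if it is vertically illegal with respect to the tile at the $\myN$-point, as in the clause $\ex p_{t'}\lor\myP\myH\bigvee_{(t'',t')\not\in V}p_{t''}$. Without these (or equivalent) formulas your argument is a plan rather than a proof; with them it becomes essentially the paper's proof. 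A secondary point: you let the turn alternate row by row, whereas the game whose \TWOEXPTIME-hardness you invoke (and the paper's encoding) alternates tile by tile; if you keep the row-wise convention you would have to justify the hardness of that variant separately.
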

\begin{proof}
Given an instance $I=(T,H,V,F,L,n)$ of the \emph{$2^n$-corridor tiling game}, we build a \ubpn-formulas $\varphi_I$ that is satisfiable if and only if player $E$ has a winning strategy on $I$.

We encode such a winning strategy as a finite $T$-labeled tree as
described in \cite{Weber09}: The levels of the tree alternately
correspond to moves of $E$ and $A$, and every node representing a move
of $E$ has one child for every next move of $A$. As each player always
has a possible move, the only way to win for $E$ is to reach a line with tiles from $L$. Therefore, every path in the encoding has to represent a tiling respecting all constraints.

The formula $\varphi_I=\varphi_s\land\varphi_n\land\varphi_t$ consists of three parts: The formula $\varphi_s$ describes the structure of the encoding and $\varphi_n$ introduces a numbering of the nodes representing one line of the tiling using the propositions $q_0,\ldots,q_{n-1}$ as shown in Figure \ref{fig:tiling}. Both are \ub-formulas and can be taken from \cite{Weber09}.
\begin{figure}[t]
\footnotesize
\begin{center}
\begin{pspicture}(0,-0.2)(10.8,1.0)
\psset{xunit=9mm}
\cnode[linecolor=red,fillstyle=solid,fillcolor=red](0,0){1mm}{a0}

\cnode(1,0){1mm}{a1}
\pnode(1.5,0){a2}
\rput(1.75,0){$\cdots$}
\pnode(2.0,0){a3}
\cnode(2.5,0){1mm}{a4}

\cnode[linecolor=red,fillstyle=solid,fillcolor=red](3.5,0){1mm}{a5}
\pnode(4.0,0){a6}
\rput(4.5,0){$\cdots$}
\pnode(5.0,0){a7}
\cnode[linecolor=red,fillstyle=solid,fillcolor=red](5.5,0){1mm}{a8}

\cnode(6.5,0){1mm}{a9}
\pnode(7.0,0){a10}
\rput(7.25,0){$\cdots$}
\pnode(7.5,0){a11}
\cnode(8,0){1mm}{a12}

\cnode[linecolor=red,fillstyle=solid,fillcolor=red](9,0){1mm}{a13}
\cnode[linecolor=black,fillstyle=solid,fillcolor=black](10,0){1mm}{a14}
\cnode[linecolor=black,fillstyle=solid,fillcolor=black](11,0){1mm}{a15}
\pnode(11.5,0){a16}
\rput(11.8,0){$\cdots$}

\ncline{->}{a0}{a1}
\ncline{->}{a1}{a2}
\ncline{->}{a3}{a4}
\ncline{->}{a4}{a5}
\ncline{->}{a5}{a6}
\ncline{->}{a7}{a8}
\ncline{->}{a8}{a9}
\ncline{->}{a9}{a10}
\ncline{->}{a11}{a12}
\ncline{->}{a12}{a13}
\ncline{->}{a13}{a14}
\ncline{->}{a14}{a15}
\ncline{->}{a15}{a16}
\psframe[linecolor=green,framearc=0.5](0.6,-0.2)(2.9,0.2)
\psframe[linecolor=green,framearc=0.5](6.1,-0.2)(8.4,0.2)

\rput(1.75,-0.4){row 1}
\rput(7.25,-0.4){row $m$}
\rput(0.0,0.4){$q_{\#}$}
\rput(3.5,0.4){$q_{\#}$}
\rput(5.5,0.4){$q_{\#}$}
\rput(9.0,0.4){$q_{\#}$}
\rput(10,0.4){$q$}
\rput(11,0.4){$q$}
\rput(1.0,0.4){$0$}
\rput(2.5,0.4){$2^{n}-1$}
\rput(6.5,0.4){$0$}
\rput(8.0,0.4){$2^{n}-1$}

\end{pspicture}
\end{center}
\caption{A path in the encoding of a winning strategy for the $2^n$-corridor tiling game with $m$ rows \cite{Weber09}.}
\label{fig:tiling}
\end{figure}
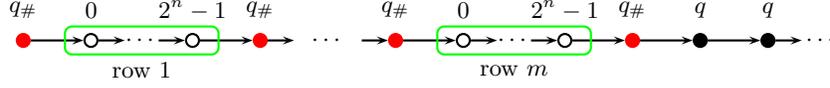

The formula $\varphi_t$ is used to describe the actual tiling. It states that every node corresponding to a position in the tiling is labeled with exactly one proposition $p_t$, representing the tile $t\in T$, and that all constraints are respected. We use $\vartheta$ as an abbreviation for $\neg q\land \neg q_{\#}\land\neg\myP(q_{\#}\land\myP(\neg q_{\#}\land \myP q_{\#}))$, expressing that the current node represents a position in the tiling and that there is at most one row above. This, together with the $\myN$-operator, will allow us to check the vertical constraints.
\begin{align*}
	\varphi_{t} & = \ag([\neg q\land \neg q_{\#}]\rightarrow[\bigvee_{t\in T}(p_t\land\bigwedge_{t\neq t'\in T}\neg p_{t'})\land\theta_H\land\theta_V\land\theta_A])\land\theta_{A'}\land\theta_F\land\theta_L\allowdisplaybreaks[1]\\
	\theta_H & = \neg\bigwedge_{i=0}^{n-1}q_i\rightarrow \bigwedge_{t\in T}(p_t\rightarrow\ax\bigvee_{(t,t')\in H} p_{t'})\allowdisplaybreaks[1]\\
	\theta_V & = \myN\ax\ag([\vartheta\land\bigwedge_{i=0}^{n-1}(q_i\leftrightarrow\myP\myH q_i)]\rightarrow\bigwedge_{t\in T}(p_t\rightarrow\bigvee_{(t,t')\in V}\myP\myH p_{t'}))\allowdisplaybreaks[1]
\end{align*}
We omit the formulas corresponding to the constraints $F$ and $L$.

The subformulas $\theta_{A}$ and $\theta_{A'}$ enforce that every possible move of $A$ is represented, where $\theta_{A'}$ treats the special case of the first row of the tiling.
\begin{align*}
	\theta_A & = q_0\rightarrow\myN\ax\ag([\vartheta\land\neg q_0\land\bigwedge_{i=1}^{n-1}(q_i\leftrightarrow\myP\myH q_i)]\\
	& \qquad\qquad\qquad\qquad\;\;\rightarrow\bigwedge_{t\in T}(p_t\rightarrow\bigwedge_{(t,t')\in H}[\ex p_{t'}\lor\myP\myH\bigvee_{(t'',t')\not\in V}p_{t''}]))\allowdisplaybreaks[1]\\
	\theta_{A'} & = \ag([\neg q_0\land\neg\myP(q_{\#}\land\myP\neg q_{\#})]\rightarrow\bigwedge_{t\in T}(p_t\rightarrow\bigwedge_{(t,t'\in H)}\ex p_{t'}))
\end{align*}

Note that a model for $\varphi_I$ might encode several possible moves for $E$ and moves of $E$ and $A$ might be represented more than once. But by removing duplicates and restricting to one arbitrary move for $E$ at every position where $E$ has to move, we obtain a winning strategy for $E$ on $I$ from a model for $\varphi_I$. For the reverse direction, a winning strategy for $E$ can be directly turned into a model for $\varphi_I$.
\qed
\end{proof}

Concerning Boolean combinations of path formulas, we can refine the proof of \TWOEXPTIME-hardness for \ctlplus by Johannsen and Lange \cite{JohannsenL03} to show the following theorem.
\begin{proposition}\label{theo:ubplus}
Satisfiability for \ubplus is \TWOEXPTIME-hard.
\end{proposition}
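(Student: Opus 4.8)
The plan is to reduce from the $2^n$-corridor tiling game, following and refining the \TWOEXPTIME-hardness proof for \ctlplus of Johannsen and Lange~\cite{JohannsenL03}. Since the only feature of \ctlplus not already available in \ubplus is the until operator, the task essentially amounts to re-expressing their construction — or the analogous construction built directly on the tiling game, in the spirit of the proof of Proposition~\ref{theo:ubpn} — using only $\myX$, $\myF$, $\myG$, and the Boolean combinations of path formulas that \ubplus provides, in particular formulas of the shape $\myE(\myF\alpha\wedge\myG\beta)$ and $\myA(\myG\alpha\vee\myF\beta)$. I would keep the encoding of a winning strategy for $E$ as a finite $T$-labeled tree whose levels alternate between moves of $E$ and of $A$, in which every node where $A$ moves has one child per legal continuation, each row of the tiling occupies a block of $2^n$ consecutive nodes, and $n$ propositions $q_0,\ldots,q_{n-1}$ realise a binary column counter that is reset at every row marker $q_{\#}$.

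The bulk of the formula is unproblematic: the structure formula, the counter formula, the horizontal constraints, and the other local parts only assert that something holds at the current node or at an immediate successor, uniformly over all nodes, and so are already \ub-formulas using $\myX$ and $\myG$ only. The genuinely new difficulty is the vertical constraint, which relates a tiling cell $c$ with column value $v$ and tile $t$ to the cell $c'$ directly below it, i.e.\ the cell exactly $2^n$ steps further along every path through $c$ — a distance far too large to address with nested $\myX$, and for which the original proof uses until. Here I would exploit the rigidity of the intended model: the column counter is strictly increasing within a row block and is reset exactly at $q_{\#}$, so along any path from $c$ the \emph{first} node lying past the next $q_{\#}$ whose counter reads $v$ is precisely $c'$, and it is unique. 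Hence the vertical constraint can be phrased as $\myA\myG$ of an implication whose conclusion is, roughly, $\myA\bigl(\myG\neg q_{\#}\vee\myF[\text{we are just past the first }q_{\#}\text{, the column is }v\text{, and the tile }t'\text{ satisfies }(t,t')\in V]\bigr)$, where ``just past the first $q_{\#}$'', ``no second $q_{\#}$ has been seen'', and similar region predicates are captured by conjoining $\myG$-safety clauses and by referring to a handful of auxiliary marking propositions that are themselves pinned down by \ub-formulas (persistence via $\myG(r\to\myX r)$, triggering via $\myG(q_{\#}\to\myX r)$, plus a ``not yet'' clause). The first-row and last-row conditions — which in Proposition~\ref{theo:ubpn} were handled with past modalities — and the moves-of-$A$ formulas are treated by the same device. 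With all of this in place, each erstwhile subformula $a\,\myU\,b$, whose right-hand side $b$ now provably occurs at a structurally determined, essentially unique position, is faithfully replaced by $\myF b$ together with a $\myG$-guard, so that the resulting \ubplus-formula $\varphi_I$ is satisfiable if and only if $E$ has a winning strategy on $I$; \TWOEXPTIME-hardness then follows since the $2^n$-corridor tiling game is complete for \AEXPSPACE, which coincides with \TWOEXPTIME~\cite{Chlebus86}.

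I expect the main obstacle to be exactly this elimination of the until operator for the vertical constraints (and, similarly, for the first/last-row conditions): without past modalities one cannot look back to the beginning of the current row, and without until one cannot say ``nothing happens before the row ends'', so the construction has to be reorganised so that every ``until-like'' assertion points to a position that the remaining conjuncts of $\varphi_I$ force to be unique — the unique column-$v$ cell of the next row — after which a plain $\myF$, suitably guarded by a $\myG$, suffices. The attendant bookkeeping — verifying that the auxiliary marking propositions admit exactly the intended models and do not permit cheating, and that the off-by-one effects of the marker definitions are accounted for — is where the proof has to be carried out with care.
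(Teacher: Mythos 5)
You correctly identify the right reduction (the $2^n$-corridor tiling game, refining Johannsen--Lange) and the right crux (the vertical constraint, once $\myU$ and past modalities are gone), but the mechanism you propose for that crux does not work, and it is precisely where the paper's proof needs two extra gadgets. First, your auxiliary marking propositions ``pinned down by \ub-formulas'' ($\myG(r\to\myX r)$, $\myG(q_{\#}\to\myX r)$, \dots) constitute a \emph{global} labeling of the model, while the predicates you need --- ``just past the \emph{first} $q_{\#}$ after the current cell'', ``no second $q_{\#}$ seen \emph{since} the current cell'' --- are relative to the node at which the $\ag$-quantified implication is evaluated; one fixed labeling cannot encode this for all cells simultaneously (that relativization is exactly what $\myN$ and the past bought in Proposition~\ref{theo:ubpn}). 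The paper handles it with a row coloring modulo three ($e_1,e_2,e_3$), so that ``the path reaches the next row but not beyond'' becomes, e.g., $e_1\land\myF e_2\land\neg\myF e_3$. Second, and more fundamentally, your phrase ``the column is $v$'' hides an inter-node comparison of the $n$ counter bits between the current cell and a future node of the path. In \ubplus the only available shape for such a comparison is a Boolean combination like $\bigwedge_i\bigl(q_i\leftrightarrow\myF(W\land q_i)\bigr)$, and this expresses ``some $W$-node has the same counter'' only if all $W$-nodes on the path carry the \emph{same} counter bits; otherwise the conjuncts may be witnessed at different nodes, and the negative direction ($q_i$ false at the cell forcing $\neg\myF(W\land q_i)$) is outright violated, since the next row contains every counter value. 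Uniqueness of the target cell $c'$ does not help, because the predicate singling it out is ``same counter as the current cell'', which is the very thing being expressed --- the argument is circular.

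This is why the paper (following Johannsen--Lange) introduces \emph{copy states}: every tiling cell gets a child subtree in which its labels are frozen forever, a path commits to one cell of the next row by diving into that cell's copy subtree ($\myF c$, with the modulo-three colors guaranteeing it is the next row and not a later one), and then $\bigwedge_i(q_i\leftrightarrow\myF(c\land q_i))$ is a well-defined bit comparison with the single committed cell, after which $\bigvee_{(t,t')\in V}\myF(c\land p_{t'})$ checks the vertical constraint; $\myA$-quantification over all such committing paths enforces it for the cell directly below. Your plan omits both the copy-state gadget and the modulo-three coloring and replaces them by ``rigidity plus uniqueness plus global markers'', which for the reasons above is not expressible in \ubplus; so the central step is missing rather than being a matter of careful bookkeeping.
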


The proof is by reduction from the $2^n$-corridor tiling game. The main idea is to use a numbering of the rows modulo three to able to express that a path reaches up to the next row, but not beyond, without the $\myU$-operator. The details can be found in the appendix.

\subsection{Upper Bound}\label{subsec:upperbound}

We prove the upper bound of Theorem \ref{theo:pectlplusn} in two steps. First, we show how to translate a \pectl-formula into a two-way hesitant alternating tree automaton, thereby giving an exponential time algorithm for \pectl-satisfiability. Afterwards, we extend this construction to \pectlplusn and weak-pebble automata.

\begin{theorem}\label{theo:pectl}
The satisfiability problem for \pectl is \EXPTIME-complete.
\end{theorem}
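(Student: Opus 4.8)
The plan is to prove Theorem~\ref{theo:pectl} by establishing matching upper and lower bounds. The lower bound is immediate: \ctl is already \EXPTIME-hard \cite{EmersonH85,Ben-AriPM83}, and \pectl syntactically contains \ctl, so satisfiability for \pectl is \EXPTIME-hard. The work is therefore entirely in the \EXPTIME upper bound, which I would obtain by the standard automata-theoretic route: given a \pectl state formula $\varphi$, construct a symmetric two-way hesitant alternating tree automaton $A_\varphi$ over the alphabet $\Sigma=2^{\prop(\varphi)}$ such that $L(A_\varphi)$ is nonempty if and only if $\varphi$ is satisfiable, with $A_\varphi$ of size polynomial in $|\varphi|$, and then appeal to an \EXPTIME nonemptiness test for such automata.

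First I would set up the translation. The states of $A_\varphi$ are (roughly) the subformulas of $\varphi$ together with their negations, so $|Q|$ is linear in $|\varphi|$; the transition function is defined by recursion on formula structure, closely following the semantics clauses in the preliminaries. Propositional cases read the label $V(x)$ and branch into \true/\false or into the Boolean combination of the immediate subformulas at the current node (direction $0$). The modality $\ex\varphi'$ becomes $(\mydiamond,q_{\varphi'})$ and $\myA(\cdot)$-style universal quantification becomes a $\mybox$ transition; the \myU, \myS, \myX, \myY operators and the fairness operator $\Finf$ are handled by the usual fixpoint-style transitions, where $\myY$ and $\myS$ use the $-1$ direction to move to the parent. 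Here is where the two-way feature is essential: past operators are evaluated by walking up the tree, and the hesitant partition must be arranged so that ``until''-type and ``since''-type states sit in transient or appropriately classified sets, with eventualities ($\myE(\varphi_1\myU\varphi_2)$, $\efinf$) placed in existential sets whose good set $G$ forces progress, and their duals ($\myA(\varphi_1\myU\varphi_2)$, $\Ginf$-type obligations) in universal sets whose bad set $B$ forbids getting stuck. I would also handle the subtlety that path formulas in \pectl are required to be evaluated along a path starting at the root: since a run of a two-way automaton naturally tracks a single current node and can move up to $\varepsilon$, one arranges that the semantics of $\myE\psi$ is captured by guessing, for the relevant path, its behaviour; the hesitant/two-way machinery of \cite{KupfermanVW00,Bozzelli08} is designed exactly for this. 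One then proves correctness by the standard two-directional argument: from a tree $T$ with $T,\varepsilon\models\varphi$ one reads off an accepting run of $A_\varphi$ on $T$, and conversely an accepting run yields, by projecting configurations back to their state components, a witnessing labelling showing $T\models\varphi$.

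Second, having the translation, the upper bound follows once we know that nonemptiness of symmetric two-way \HAA is decidable in time exponential in the size of the automaton. I would either cite this (it is essentially the result underlying Bozzelli's \TWOEXPTIME bound for \ctlstar-with-past \cite{Bozzelli08}, specialised to hesitant automata, which have the weak/Rabin-index-bounded structure that keeps the complexity down one exponential) or sketch it: first eliminate two-wayness, turning the symmetric two-way \HAA into a one-way alternating tree automaton over $D$-trees of bounded degree with only a polynomial blow-up (the hesitant structure bounds the number of ``profiles'' that need to be tracked), then convert the resulting one-way alternating automaton to a nondeterministic tree automaton with an exponential blow-up, whose emptiness is testable in polynomial time; composing, the whole test is in \EXPTIME in $|A_\varphi|$, hence in $|\varphi|$.

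The main obstacle is the correctness of the translation for the combination of \emph{past} modalities with \emph{branching-time} path quantifiers and with \emph{fairness}, all at once, using only a two-way automaton that is not forced to eventually move downward. The delicate point is that a path referred to by $\myE\psi$ must start at the root, yet obligations generated while processing $\psi$ can be ``spawned'' at a node deep in the tree; one must make sure the automaton's two-way navigation faithfully reconstructs the prefix of that path above the current node and does not conflate distinct paths, while simultaneously the hesitant acceptance condition correctly discharges the fairness and until eventualities along exactly that path. Getting the partition into existential/universal/transient sets and the $\langle G,B\rangle$ sets right for the past operators (which, having a finite cumulative past, always terminate going up and so should land in transient sets) is where the proof needs care; once that bookkeeping is pinned down, the size bound and the appeal to the nonemptiness algorithm are routine.
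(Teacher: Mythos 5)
Your proposal is correct and follows essentially the same route as the paper: both translate a \pectl-formula into a symmetric two-way hesitant alternating tree automaton of size linear in the formula (states based on the closure of $\varphi$, $\mydiamond$/$\mybox$ for path quantification, direction $-1$ for $\myY$ and $\myS$, eventualities in existential sets and their duals in universal sets, with a dedicated handling of $\efinf$), and then invoke the \EXPTIME\ nonemptiness test for such automata (the paper's Theorem \ref{theo:TWHAA-complete}, which you replace by citing/sketching the corresponding result in the spirit of \cite{Bozzelli08,KupfermanVW00}). The lower bound via \ctl\ is the same in both.
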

\begin{proof} The lower bound follows from the \EXPTIME-hardness of \ctl. To prove the upper bound, we extend the construction from \cite{Vardi95} that translates a given \ctl-formulas into an alternating B\"uchi tree automaton.

Let $\varphi$ be a \pectl-formula and $\prop_{\varphi}$ the set of proposition symbols occurring in $\varphi$. We construct a symmetric two-way hesitant alternating tree automata $\aA_{\varphi}=(Q,\Sigma,q^0,\delta,\langle G,B\rangle)$ with $\Sigma=2^{\prop_{\varphi}}$, such that $\varphi$ holds at the root of some $\Sigma$-labeled tree $(T,V)$ if and only if $\aA_{\varphi}$ accepts this tree. The result follows since nonemptiness for these automata is in \EXPTIME (Theorem \ref{theo:TWHAA-complete}).

Let $\overline{\psi}$ denote the \emph{dual} of a formula $\psi$. The dual of a formula is obtained by switching $\land$ and $\lor$, and by negating all other maximal subformulas, identifying $\neg\neg\psi$ and $\psi$. E.g., the dual of $p\lor(\neg q\land\eg p)$ is $\neg p\land(q\lor \neg\eg p)$.

The set $Q$ of states of $\aA_{\varphi}$ is based on the Fisher-Ladner-closure of $\varphi$. It contains $\varphi$, $(\ex\efinf\psi)\land\psi$ for every subformula $\efinf\psi$ of $\varphi$, and is closed under subformulas and negation. The initial state is $\varphi$. The set $G$ contains all formulas of the form $\neg\myE(\chi\myU\psi)$, $\neg\myA(\chi\myU\psi)$, and $(\ex\efinf\psi)\land\psi$, the set $B$ all formulas of the form $\neg\ex\efinf\psi$. The transition function is defined as follows,
\begin{center}
	\begin{tabular}{rclrcll}
	  $\delta(\true,\sigma)$ & = & $\true$ & $\delta(p,\sigma)$ & = & $\true\qquad$ & if $p\in \sigma$\\
	  $\delta(\false,\sigma)$ & = & $\false$ & $\delta(p,\sigma)$ & = & $\false$ & if $p\not\in \sigma$\\
    $\delta(\psi\land\xi,\sigma)$ &  = & $(0,\psi)\land(0,\xi)\qquad\qquad$ & $\delta(\neg\psi,\sigma)$ & = & $\overline{\delta(\psi,\sigma,b)}$\\
	  $\delta(\ex\psi,\sigma)$ &  = & $(\mydiamond,\psi)$ & $\delta(\myY\psi,\sigma)$ &  = & $(-1,\psi)$\\
	\end{tabular}
	\begin{tabular}{rcll}
    $\delta(\efinf\psi,\sigma)$ & = & $(\mydiamond,\efinf\psi)\lor(0,(\ex\efinf\psi)\land\psi)$\\
    $\delta(\myE(\chi\myU\psi),\sigma)$ & = & $(0,\psi)\lor((0,\chi)\land (\mydiamond,\myE(\chi\myU\psi))$\\
    $\delta(\myA(\chi\myU\psi),\sigma)$ & = & $(0,\psi)\lor((0,\chi)\land (\mybox,\myA(\chi\myU\psi))$\\
	  $\delta(\chi\myS\psi,\sigma)$ & = & $(0,\psi)\lor((0,\chi)\land (-1,\chi\myS\psi))$
	\end{tabular}
\end{center}
where the notion of dual is extended to the transition function $\delta$ in the obvious way, e.g., $\overline{\delta(\myE(\chi\myU\psi),\sigma)} = (0,\overline{\psi})\land((0,\overline{\chi})\lor (\mybox,\neg\myE(\chi\myU\psi))$.

To show that $\aA_{\varphi}$ is a hesitant automaton, we have to define the partition of $Q$. The formulas $(\ex\efinf\psi)\land\psi$, $\ex\efinf\psi$, and $\efinf\psi$ form an existential set. Likewise, $(\neg\ex\efinf\psi)\lor\neg\psi$, $\neg\ex\efinf\psi$, and $\neg\efinf\psi$ form a universal set. Every other formula $\psi\in Q$ constitutes a singleton set $\{\psi\}$. These sets are all transient, except for the sets $\{\myE(\chi\myU\psi)\}$ and $\{\neg\myA(\chi\myU\psi)\}$, which are existential, and the sets $\{\neg\myE(\chi\myU\psi)\}$ and $\{\myA(\chi\myU\psi)\}$, which are universal. The partial order on these sets is induced by the subformula relation.
\qed
\end{proof}

We extend this construction to prove our result on \pectlplusn. To this end, we have to find a way to deal with the $\myN$-operator and Boolean combinations of path formulas. As we will see, pebbles can be used to handle both.

To obtain the desired result, it is important\footnote{Comment by
  Thomas Schwentick: this remark might puzzle the reader in the light
  of the results of Section \ref{sec:auto}. In an earlier
  version of the paper, the upper bound on the branching width in Proposition
  \ref{prop:bounded_branching} was $2^{O(n^k)}$, hence the need to
  bound the number $k$ of pebbles. However, shortly before submission time,
Volker discovered that this upper bound can be improved to
$2^{n^2\cdot(k+1)}$, thus resolving the need to bound the number of
pebbles. Seemingly, he did not find time to fully adapt (and simplify)
the paper accordingly.}

 that the number of pebbles an automaton uses does not depend on the formula from which it is constructed. But if we translate a \pectlplusn-formula into an equivalent pebble automaton along the lines of the above construction, the number of pebbles depends on the nesting depth of $\myN$-operators and Boolean combinations of path formulas. To avoid this, we show that we can restrict to formulas with limited nesting. But there is a price we have to pay: We will only obtain an equisatisfiable formula/automaton but not an equivalent one as in the proof of Theorem \ref{theo:pectl}. Nevertheless, this will suffice to obtain our complexity result.

We say that a \pectlplusn-formula is in \emph{normal form}, if it does
not contain any nesting of \myN-operators, all path quantifiers that
are followed by a Boolean combination of path formulas are not nested
and occur only in the scope of an \myN-operator, and finally all Boolean combinations of path formulas are in negation normal form.

\begin{lemma}\label{lem:normal_form}
Every \pectlplusn-formula $\varphi$ can be efficiently transformed into a \pectlplusn-formula $\psi$ in normal form with $|\psi|=O(|\varphi|)$, such that $\psi$ is satisfiable if and only if $\varphi$ is satisfiable.
\end{lemma}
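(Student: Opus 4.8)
The plan is to eliminate the three sources of unbounded nesting one at a time, each time introducing fresh propositions that "witness" the truth of certain subformulas so that the remaining formula need only refer to those propositions. The key observation is that a branching-time formula $\theta$ is satisfiable iff there is a tree whose labeling, at each node, additionally records the truth value of each subformula of $\theta$ at that node (together with some consistency axioms enforced by the formula). This labeling trick is exactly what lets one "flatten" nesting without losing (un)satisfiability.

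First I would handle the \emph{nesting of $\myN$-operators}. For each subformula of the form $\myN\chi$ occurring inside the scope of another $\myN$, introduce a fresh proposition $p_{\myN\chi}$ and replace the occurrence of $\myN\chi$ by $p_{\myN\chi}$; then conjoin a top-level requirement $\ag\,\myN(\,p_{\myN\chi}\leftrightarrow \myN\chi\,)$ — but one must be careful here, because the semantics of $\myN$ resets the past, so the correct gadget is to assert $\ag(p_{\myN\chi}\leftrightarrow \myN\chi)$ with $\myN\chi$ now of lower $\myN$-depth after recursion. Iterating this bottom-up on $\myN$-depth removes all nesting of $\myN$. Second, for \emph{Boolean combinations of path formulas}: after pushing negations inward so every such combination is in negation normal form (which is a linear, standard transformation), each maximal path-formula-Boolean-combination $\psi$ appears under a single path quantifier $\myE$ or $\myA$; if this quantifier is nested inside another path quantifier, or is not under an $\myN$, we again introduce a fresh proposition $p_{\myE\psi}$, replace $\myE\psi$ by $p_{\myE\psi}$, and add $\ag(p_{\myE\psi}\leftrightarrow\myE\psi)$ — and crucially we may wrap this equivalence in an $\myN$ (which is semantically harmless for a \emph{state} formula) so that the requirement "occurs only in the scope of an $\myN$" is met. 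Third, the requirement that such path quantifiers are \emph{not themselves nested} is achieved by the same device applied to state subformulas appearing inside the path formula $\psi$: replace each maximal state subformula of $\psi$ by a fresh proposition and assert the corresponding $\ag$-equivalence. Each of these steps increases the formula size only by a constant factor per subformula, so the total blow-up is $O(|\varphi|)$, and each step visibly preserves satisfiability in both directions (a model of the original extends canonically to a model of the transformed formula by evaluating the fresh propositions; a model of the transformed formula projects back by forgetting them).

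The steps should be carried out in the order: (1) negation normal form for path-Boolean-combinations; (2) flatten $\myN$-nesting; (3) flatten the path-quantifier nesting inside Boolean combinations; (4) force the relevant path quantifiers under an $\myN$. One must check that steps do not reintroduce problems fixed by earlier steps — e.g. the $\ag(\cdots)$ consistency conjuncts are themselves \pectl-like formulas of bounded nesting, and the $\myN$ wrappers we add are not nested because the equivalences are top-level conjuncts — so a single bottom-up pass per step suffices rather than a fixpoint iteration.

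The main obstacle I expect is the interaction between the $\myN$-operator and the \emph{past} modalities when introducing witness propositions: a naive $\ag(p\leftrightarrow\chi)$ asserted at the root need not propagate correctly once $\chi$ contains $\myS$, $\myY$, or is evaluated after an $\myN$ has truncated the past, so the consistency axioms must be stated with the right combination of $\ag$ and $\myN$ (and possibly evaluated "globally" in the sense that they must hold in every subtree reachable by an $\myN$). Getting these gadget formulas exactly right — so that they are themselves in normal form, contribute only $O(1)$ size each, and genuinely pin down the fresh propositions on \emph{all} relevant computation (sub)trees — is the delicate part; the size bound and the equisatisfiability argument are then routine.
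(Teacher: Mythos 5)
Your overall strategy---renaming subformulas by fresh propositions and adding consistency conjuncts---is the same as the paper's, but the one step you defer as ``the delicate part'' is precisely where your concrete recipe breaks, and it is not a technicality. When a Boolean combination $\myE\psi$ whose path formulas contain past modalities occurs nested inside another such quantifier, within the scope of a non-root $\myN$, you replace it by $p_{\myE\psi}$ and add the \emph{top-level} conjunct $\myN\ag(p_{\myE\psi}\leftrightarrow\myE\psi)$. At the root the wrapping $\myN$ is vacuous, so this pins $p_{\myE\psi}$ at every node to the truth of $\myE\psi$ with respect to the \emph{full} past, whereas the occurrence you removed must be evaluated with the past truncated at the node where the enclosing $\myN$ was applied. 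This destroys equisatisfiability. Concretely, $\ex\,\myN\,\myE\bigl(\myF a\wedge\myF\,\myE(\myF b\wedge\neg\myY\true)\bigr)$ is satisfiable (take the witnessing node to be the new root introduced by $\myN$), but after your renaming the conjunct $\myN\ag\bigl(p\leftrightarrow\myE(\myF b\wedge\neg\myY\true)\bigr)$ forces $p$ to be false everywhere except the true root, so $\ex\,\myN\,\myE(\myF a\wedge\myF p)$ becomes unsatisfiable. Pulling past-dependent witness definitions out of the $\myN$-scope is exactly the error in Lemma 4.3 of \cite{LaroussinieS00} that this paper refutes in its appendix, so the lemma genuinely cannot be established with top-level definitions alone.

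The paper's proof avoids this by a specific ordering and placement that your sketch does not supply: first \emph{every} subformula $\myN\chi$ (innermost first) is renamed via a top-level conjunct $\ag(p_{\myN\chi}\leftrightarrow\myN\chi)$, which is sound because the truth of $\myN\chi$ at a node is past-independent (it depends only on the subtree below); then the remaining $\myN$-free part is prefixed with a single $\myN$, so all Boolean combinations sit under some $\myN$; finally, nested Boolean combinations are renamed by adding the defining conjunct $\ag(p\leftrightarrow\myE\chi)$ \emph{inside} the scope of that same $\myN$, i.e.\ as a conjunct to its argument, so the definition is evaluated with exactly the truncated past the removed occurrence refers to. Your parenthetical remark that the axioms may have to ``hold in every subtree reachable by an $\myN$'' gestures at this, but you neither fix the placement nor argue soundness of the variant you actually write down, so the proof has a gap at its central point. (Your additional renaming of every maximal state subformula inside $\psi$ is also unnecessary---the normal form admits arbitrary $\myN$-free, combination-free state formulas, including past operators, inside path formulas---and, if done with global equivalences, suffers from the same past-truncation defect.)
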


We show that any \pectlplus-formula in normal form can be translated into a \kWPHAA with only two pebbles. This completes the proof of Theorem \ref{theo:pectlplusn} as nonemptiness for these automata is in \TWOEXPTIME by Theorem \ref{theo:kWPHAA-complete}.

\begin{lemma}\label{lem:logictoautomata}
Given a \pectlplusn-formula $\varphi$ in normal form, we can construct a symmetric \WPHAA{2} $\aA_{\varphi}$ of size $O(|\varphi|)$, such that $L(\aA_{\varphi})\neq\emptyset$ if and only if $\varphi\text{ is satisfiable}$.
\end{lemma}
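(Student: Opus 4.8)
\medskip
\noindent\textit{Proof plan.}
The plan is to extend the translation from the proof of Theorem~\ref{theo:pectl}. Start from a formula $\varphi$ in normal form (Lemma~\ref{lem:normal_form}). The states of $\aA_\varphi$ are, as before, built from the (duals of the) subformulas of $\varphi$ together with the auxiliary states $(\ex\efinf\psi)\land\psi$, plus a constant number of new states per subformula for the path-formula machinery described below; every transition of the \pectl-part is taken over verbatim (now fed the extra Boolean argument $b$, which it ignores), except that the past modalities are made sensitive to $b$: for a state subformula $\myY\alpha$ we set $\delta(\myY\alpha,\sigma,\true)=\false$, and for $\chi\myS\psi$ we set $\delta(\chi\myS\psi,\sigma,\true)=(0,\psi)$, while for $b=\false$ the transitions are exactly as in Theorem~\ref{theo:pectl}. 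This already makes the syntactic constraint ``no $(-1,p)$ when $b=\true$'' automatic, and it is the mechanism by which a dropped pebble will act as a new root for past operators. The two extensions still to be treated --- the $\myN$-operator and Boolean combinations of path formulas --- are each handled by one pebble, and the normal form guarantees that the two uses are stack-compatible and never nested (every Boolean combination of path formulas sits inside exactly one $\myN$, no $\myN$ sits inside another, and no such Boolean combination sits inside another), so two weak pebbles suffice.

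For a subformula $\myN\psi'$ (with $\psi'$ necessarily $\myN$-free in normal form) put $\delta(\myN\psi',\sigma,b)=(\mathrm{drop},q)$ and run the \pectl-style subautomaton for $\psi'$ from $q$. The pebble marks the node $x$ at which $\myN$ was applied; since $b=\true$ precisely when the automaton revisits $x$, the $b$-sensitive past transitions above make $\myY$ and $\myS$ behave at $x$ exactly as at the root of $T_x$, which is the semantics of $\myN$; at every other node they are the usual two-way past transitions, and at the actual root of the whole tree they are forced to stop by the run condition forbidding $(-1,\cdot)$-tuples at $\varepsilon$. Because $\psi'$ contains no further $\myN$, this first pebble is used at most once along any branch of a run.

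For a subformula $\myE\psi$ with $\psi$ a positive Boolean combination of path formulas --- which, by the normal form, occurs only inside an $\myN$, so the first pebble is already down --- set $\delta(\myE\psi,\sigma,b)=(\mathrm{drop},q_\psi)$, dropping the second pebble on the node $x$ from which the witnessing path must start; from $q_\psi$ the automaton (i) evaluates the backward part of $\psi$ by walking up from the pebble towards the $\myN$-point, where it stops, and (ii) guesses a single successor path of $x$ and verifies the forward part of $\psi$ along it. ``Until'' and ``finally'' eventualities on this path are unfolded disjunctively and discharged through the $G$-component of the hesitant acceptance condition, exactly as for $\myE(\chi\myU\psi)$ and $\efinf$ in Theorem~\ref{theo:pectl}, while their duals ($\myG$, $\Ginf$, release) are unfolded conjunctively and placed in universal cells. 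Since $\psi$ is in negation normal form there are no further negations to push, so conjunctions of path subformulas uniformly produce conjunctively related tuples and disjunctions disjunctively related tuples, keeping $\aA_\varphi$ hesitant. The decisive role of the second pebble here is to mark the origin of the guessed path so that the bookkeeping over the conjuncts --- which is exactly the source of the jump from \EXPTIME\ to \TWOEXPTIME\ for logics with Boolean combinations of path formulas --- need not be encoded into the state set, but is instead reconstructed later by the nonemptiness procedure of Theorem~\ref{theo:kWPHAA-complete}; this is what keeps $\aA_\varphi$ of size $O(|\varphi|)$. Since $\psi$ contains no nested Boolean combination of path formulas, the second pebble too is used at most once along any branch and nests correctly below the first.

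It then remains to assemble the pieces: exhibit the partition of $Q$ into existential, universal and transient cells with a partial order witnessing hesitance (the \pectl-part is as in Theorem~\ref{theo:pectl}; the states introduced for an $\myN\psi'$, resp.\ an $\myE\psi$, sit strictly below $\myN\psi'$, resp.\ $\myE\psi$, and are ordered among themselves by the sub(path)formula relation), check that $\aA_\varphi$ is symmetric and that $|\aA_\varphi|=O(|\varphi|)$ (only $O(1)$ states are added per subformula, and the alphabet/transition representation is symbolic in $\prop_\varphi$ as in Theorem~\ref{theo:pectl}), and finally prove $L(\aA_\varphi)\neq\emptyset$ iff $\varphi$ is satisfiable by the two standard directions, turning a model into an accepting run by following the construction (pebbles dropped and lifted in stack discipline along the guessed paths) and reading a model off an accepting run by induction on subformulas, the only new cases being $\myN$ (immediate from the pebble marking the subtree root together with the $b$-sensitive past transitions) and $\myE\psi$ for Boolean $\psi$ (the guessed successor path together with the fixed ancestor path is the required path, and the acceptance condition certifies all eventualities). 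I expect the genuinely hard part to be step~(ii): faithfully capturing the single-path semantics of $\myE\psi$ for a Boolean combination $\psi$ while keeping $|\aA_\varphi|=O(|\varphi|)$, and at the same time keeping the interaction of the second pebble with the two-way past transitions, with the hesitant partition, and with the stack discipline all mutually consistent --- in particular, ensuring that no third pebble is ever forced; the normal form of Lemma~\ref{lem:normal_form} is precisely what makes this possible.
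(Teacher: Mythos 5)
Your treatment of the $\myN$-operator (drop the first pebble, make the past transitions sensitive to the Boolean flag $b$ so that the pebbled node acts as a root) and your use of the normal form to cap the pebble count at two both match the paper. The gap is in your step~(ii), and it is exactly the step you flag as ``the genuinely hard part'': you never give a mechanism by which the several conjuncts of a Boolean combination $\psi$ in $\myE\psi$ are evaluated on the \emph{same} path. An alternating automaton that spawns one copy per conjunct loses synchronization immediately, since each copy makes its own $\mydiamond$-choices; and your suggestion that the ``bookkeeping over the conjuncts \ldots is instead reconstructed later by the nonemptiness procedure'' does not work, because the nonemptiness procedure of Theorem~\ref{theo:kWPHAA-complete} is a generic decision procedure for \kWPHAA{} and performs no path synchronization --- the pebble marks a single node, not a path, so it cannot recover which branch each copy followed. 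Without a synchronization mechanism your automaton accepts trees in which, say, $\Finf p$ holds on one path from $x$ and $\Finf q$ on a different one, which is not a model of $\myE(\Finf p\land\Finf q)$; avoiding this by tracking the set of pending conjuncts in the state is precisely the exponential blow-up you are trying to avoid.

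The paper closes this gap with two specific devices, both absent from your proposal. First, the path formulas that admit finite witnesses (e.g.\ $\chi\myU\psi$, $\myX\psi$, and the past modalities) are not checked while guessing the path downward: the automaton first drops the second pebble at $x$, guesses a finite prefix of the witnessing path, and then checks these conditions while walking back \emph{up} toward the pebble --- since the upward path in a tree is unique, all these subautomata are synchronized for free. Second, for the conditions on the infinite suffix (only $\myG\chi$ and $\Finf\chi$ remain after the normal form), the construction introduces a fresh proposition $p_\psi$ per subformula $\myE\psi$, adds the conjunct $\ag(\neg p_\psi\lor\eg p_\psi)$, requires the guessed prefix to end in a $p_\psi$-labeled node, and checks the suffix conditions on \emph{all} $p_\psi$-labeled paths; this is sound because positive Boolean combinations of $\myG$/$\Finf$-conditions are closed under taking suffixes. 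This second device is also why the lemma only claims equisatisfiability ($L(\aA_\varphi)\neq\emptyset$ iff $\varphi$ satisfiable) rather than equivalence --- a distinction your write-up does not exploit. A further, smaller omission: the second pebble must sometimes be \emph{lifted} again (e.g.\ for $\myF\myP p$ the witness for $\myP p$ may lie above $x$), which your stack-discipline discussion does not accommodate.
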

\begin{proof}
We extend the proof of Theorem \ref{theo:pectl}, showing how to use pebbles to handle the additional features of \pectlplusn. Since $\varphi$ is given in normal form, two pebbles will suffice.

The handling of the $\myN$-operator is straightforward: We only have to drop the pebble and never lift it again. As \kWPHAA are not allowed to move above a pebble in a tree, the dropping of the pebble corresponds accurately to the meaning of the $\myN$-operator.

The handling of Boolean combinations of path formulas is more involved and mainly a matter of synchronization. An automaton corresponding to the formula $\myE(\Finf p\land\Finf q)$ cannot simply split into two automata corresponding to $\Finf p$ and $\Finf q$, respectively, as these two automata need to run on the same path in the tree. I.e., they have to be synchronized.

We will achieve synchronization using two different techniques. To this end, let $\myE\psi$ be a subformula of $\varphi$ such that $\psi$ is a Boolean combination of path formulas $\rho_1,\ldots,\rho_l$ and $\pi$ a path on which we want to evaluate $\psi$. For some of the path formulas $\rho_i$, it is the case that if they hold on $\pi$, then a finite prefix suffices to witness this, e.g., if $\rho_i=p\myU q$. For other path formulas we have to consider the whole, probably infinite path $\pi$. What thereof is the case depends on the temporal operator and whether it appears negated or not.

To evaluate $\myE\psi$ at a node $u$, $\aA_{\varphi}$ will first guess a finite prefix of a path. The intention is that this prefix can serve as a witness for all path formulas $\rho_i$ that allow for a finite witness, either to show that they hold or that they do not hold on this path.

Those parts of $\psi$ that refer to the finite prefix are now checked by $\aA_{\varphi}$ while moving up the tree again. E.g., if $\rho_i=\myF p$, then the subautomaton $\aA_{\rho_i}$ corresponding to $\rho_i$ will run upwards looking for a node $v$ labeled by $p$. As there is only one path going upward in a tree, we get synchronization for free. But we have to make sure that $v$ is a descendant of $u$. Therefore, the second pebble has to be dropped at $u$ before the automaton starts to guess the finite prefix. This allows $\aA_{\rho_i}$ to reject when it reaches the pebble position without having seen a state labeled by $p$. On the other hand, if $\rho_i=\myF\myP p$, it is important that the second pebble can be lifted again as the witness might be above $u$ in this case.

We still have to synchronize those subautomata that correspond to path formulas talking about the infinite suffix of the path. But there are only two types of conditions left, namely those of the form $\myG\chi$ and those of the form $\Finf\chi$. We can easily see that if a path satisfies a positive Boolean combination of such conditions, then every suffix of this path does so as well. This allows us to deploy the following technique.

Roughly speaking, we want to reduce the satisfiability problem for $\varphi$ to satisfiability over a restricted class of models, where the suffixes of witnessing paths for Boolean combinations of path formulas are labeled by additional propositions. More precisely, for a subformula $\myE\psi$ of $\varphi$ we introduce a new propositional symbol $p_{\psi}$ and add to $\varphi$ the new conjunct $\ag(\neg p_{\psi}\lor\eg p_{\psi})$. The automaton we are going to construct for this extended formula will, when evaluating $\myE\psi$, work as follows: It will drop the pebble and guess a prefix of a path on which $\psi$ is supposed to hold. But this prefix has to end in a node labeled by $p_\psi$. The conditions on this finite prefix can be checked as described above. For the conditions on the suffix, we use the labeling to synchronize the independent subautomata corresponding to conditions to be checked. 

Note that we cannot guaranty that there is only one path labeled by $p_{\psi}$. But we can simply check that the conditions hold on all paths labeled by $p_{\psi}$ as there is at least one such path. Please also note that this technique cannot be used for the conditions on the finite prefix of the path as the labeling is not allowed to depend on the node at which $\myE\psi$ is evaluated.
\qed
\end{proof}

\section{Nonemptiness of Weak-Pebble Automata}\label{sec:auto}

The complexity of the nonemptiness problem for weak-pebble alternating tree automata is analyzed in this section.

\begin{theorem}\label{theo:kWPHAA-complete}
The nonemptiness problem for (symmetric) $k$-weak-pebble hesitant alternating tree automata is complete for \TWOEXPTIME.
\end{theorem}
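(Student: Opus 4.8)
The plan is to prove the two bounds separately. For the lower bound, \TWOEXPTIME-hardness already follows from the logic results: by Lemma \ref{lem:logictoautomata} every \pectlplusn-formula in normal form translates into a symmetric \WPHAA{2}, and by Proposition \ref{theo:ubpn} (or Proposition \ref{theo:ubplus}) satisfiability for a fragment of \pectlplusn is already \TWOEXPTIME-hard; combined with Lemma \ref{lem:normal_form} this gives a polynomial reduction from a \TWOEXPTIME-hard problem to nonemptiness of symmetric \WPHAA{2}, hence to \kWPHAA nonemptiness in general. (Even without invoking the logic, one can encode the $2^n$-corridor tiling game directly, but reusing the logic machinery is cleaner.)

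The substantive direction is the \TWOEXPTIME upper bound. First I would remove the two-way movement and the pebbles, reducing step by step to a one-way model whose nonemptiness is already understood. The key structural fact is a \emph{bounded-branching} property: if a \kWPHAA $A$ of size $n$ with $k$ pebbles accepts some tree, then it accepts a $D'$-tree for a set $D'$ of arities whose values are bounded by $2^{n^2\cdot(k+1)}$ — this is exactly Proposition \ref{prop:bounded_branching} alluded to in Schwentick's footnote. The intuition is a Rabin-style ``memoryful'' selection argument: at each node one only needs, for each state and each relevant pebble placement, at most one witnessing child per existential obligation, and the number of distinct ``obligation profiles'' is single-exponential in $n$ and in $k$. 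Restricting attention to trees of this bounded branching degree lets us treat $D'$ as part of the (now exponentially larger, but still elementary) input and simulate the ``$-1$'' and ``root'' moves together with the at most $k$ stack-discipline pebbles by a nondeterministic guess-and-check that is folded into the state space.

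Concretely, I would argue that a \kWPHAA $A$ on $D'$-trees can be converted into an equivalent \emph{one-way} hesitant alternating tree automaton (or directly into a nondeterministic parity/Rabin tree automaton) $A'$ whose state set records: the current $A$-state, a bounded amount of information about the path back to the root (needed to resolve upward moves and the ``root'' target), and the tuple of pebble positions, which by the weak/stack discipline can be summarised by the guessed sub-branch from the last pebble down to the current node. Because pebbles are \emph{weak} — the automaton only ever inspects the subtree below the last pebble and only sees that last pebble — this summary has size polynomial in $n$ for each pebble, so $|A'|$ is $2^{O(\mathrm{poly}(n,k))}$ single-exponential in the size of $A$, and the hesitant acceptance condition passes through to a Rabin/parity condition with a polynomial number of pairs. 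Applying the known \EXPTIME bound for nonemptiness of one-way hesitant alternating tree automata — equivalently, the classical result that nonemptiness of a nondeterministic Rabin tree automaton with $m$ states and $h$ pairs is solvable in time polynomial in $m$ and exponential in $h$ — to $A'$ then yields an overall $2^{2^{O(\mathrm{poly}(n,k))}}$ bound, i.e.\ \TWOEXPTIME. Since the construction preserves the symmetric restriction (box/diamond moves are handled exactly as in the two-way symmetric \HAA case of Bozzelli \cite{Bozzelli08}), the symmetric case is covered by the same argument.

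The main obstacle is the bounded-branching proposition together with the faithful elimination of the two-way and pebble features in a way that keeps the state blow-up \emph{single}-exponential: naively encoding a pebble position as an arbitrary node of a bounded-degree but unbounded-depth tree would already be too expensive, so the argument must genuinely exploit that a weak pebble is only consulted via the predicate ``$y_{\mpp{\bar y}}=x$'' and that pebbles are dropped and lifted in stack order along a single root-to-leaf branch, so that the relevant pebble data is a length-bounded (by the relevant path information one is already tracking) annotation rather than a global coordinate. Getting the combinatorics of the child-selection argument right — showing $2^{n^2\cdot(k+1)}$ children suffice, uniformly in the formula — is where the real work lies; everything downstream is an assembly of standard automata-theoretic reductions.
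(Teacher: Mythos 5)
The lower-bound half of your proposal coincides with the paper's argument (hardness via Proposition \ref{theo:ubpn}, Lemma \ref{lem:normal_form} and Lemma \ref{lem:logictoautomata}), but the upper bound has a genuine gap exactly at the step you yourself call the crux: the claim that a \kWPHAA of size $n$ can be converted, with only \emph{single}-exponential blow-up, into an equivalent one-way hesitant alternating automaton ``or directly into a nondeterministic parity/Rabin tree automaton'' $A'$. In the nondeterministic variant this is provably hopeless: nonemptiness of a nondeterministic Rabin automaton with $2^{\mathrm{poly}(n,k)}$ states and polynomially many pairs can be decided in time polynomial in the number of states and exponential in the number of pairs, i.e.\ in time $2^{\mathrm{poly}(n,k)}$, so your reduction would put \kWPHAA nonemptiness in \EXPTIME and contradict the \TWOEXPTIME-hardness you establish in your first paragraph (unless \EXPTIME $=$ \TWOEXPTIME). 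In the alternating variant the final arithmetic is fine, but the conversion is precisely the hard part and is not substantiated: removing upward moves requires guessing annotations of the input tree that must be consistent across all copies of the automaton, which independent alternating copies cannot enforce, so the known eliminations of two-wayness go through a nondeterministic automaton; and the pebbles then force that automaton to record, at each node and for each pebble level, a \emph{set} of behaviour profiles (one per relevant pebble placement), not a single summary. Your proposed summary, ``the guessed sub-branch from the last pebble down to the current node,'' is an unbounded object; the finite abstraction of even a single placement already needs on the order of $n^2$ bits, and a node is visited under many placements, which is exactly where a second exponential appears.

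This is also exactly where the paper pays: Lemma \ref{lem:automata_translation} restricts to homogeneous (memoryless) accepting runs and builds a nondeterministic Rabin automaton \aB\ with a \emph{single} acceptance pair whose states are tuples $(X_0,\ldots,X_k,\rho)$, each $X_i$ being a set of $2^{O(n^2)}$ possible profiles (state sets plus relations describing upward, downward and lift paths, with request sets for acceptance), hence doubly exponentially many states; nonemptiness of a one-pair Rabin automaton is polynomial in its size, which gives the \TWOEXPTIME bound. Proposition \ref{prop:bounded_branching} is used only to reduce the symmetric case to the nonsymmetric one (for nonsymmetric automata the arity set $D$ is already part of the input), not as a first step for all automata as in your sketch; that part of your plan is harmless but misplaced. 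To repair your proof you would have to either supply a genuinely new single-exponential translation into a one-way \emph{alternating} hesitant automaton --- which you do not --- or accept the doubly exponential nondeterministic construction, i.e.\ the paper's route.
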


We prove this theorem for the case of nonsymmetric automata by reduction to the nonemptiness problem for Rabin tree automata \cite{Rabin69,Thomas90}. Afterwards, we generalize the result to symmetric automata by observing that every symmetric \kWPHAA accepts a tree of bounded branching degree.

\begin{mydefinition}
A \emph{nondeterministic Rabin tree automaton} (NRA) \aA is a tuple $(Q,\Sigma,D,q^0,\delta,F)$, such that $Q$ is a finite set of states, $\Sigma$ is a finite alphabet, $D$ a finite set of arities, $q^0\in Q$ is the initial state, $F$ is a set  $\{\langle G_1,B_1\rangle,\ldots,\langle G_m,B_m\rangle\}$ with $G_i,B_i\subseteq Q$, and 
$\delta:Q\times\Sigma\times D\rightarrow  2^{Q^*}$
is a transition function, such that $\delta(q,\sigma,d)\subseteq Q^d$ for all $q\in Q$, $\sigma\in\Sigma$, and $d\in D$.

We omit the (straightforward) definition of a run. An infinite path $\pi$ in a run of \aA satisfies the acceptance condition $F=\{\langle G_1,B_1\rangle,\ldots,\langle G_m,B_m\rangle\}$ if and only if there exists a pair $\langle G_i,B_i\rangle\in F$ such that $inf(\pi)\cap G_i\neq\emptyset$ and $inf(\pi)\cap B_i=\emptyset$.

A run $r$ on a tree $T$ is accepting iff every infinite path of $r$ satisfies the acceptance condition and we have $\delta(q,V(x),0)=\{\varepsilon\}$ for every finite path ending in a state $q$ at a leaf $x$ of $T$,where $\varepsilon$ denotes the sequence of states of length 0.
\end{mydefinition}

The last part of the definition is nonstandard and used to avoid the requirement that every path of $T$ is infinite. Note that \aA rejects a finite path if $\delta(q,V(x),0)=\emptyset$.

The nonemptiness problem for these automata is \NP-complete in general, but it can be solved in polynomial time if the number of tuples in the acceptance condition is bounded by a constant \cite{EmersonJ99}. For our purposes, one tuple suffices.
\begin{proposition}[\cite{EmersonJ99}
] The nonemptiness problem for nondeterministic Rabin tree automata whose acceptance condition contains only one tuple can be decided in polynomial time.
\end{proposition}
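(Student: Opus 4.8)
The plan is to reduce the nonemptiness problem to deciding the winner of a two-player parity game with a bounded number of priorities, a problem for which a polynomial-time algorithm is classical. Given an NRA $\aA=(Q,\Sigma,D,q^0,\delta,\{\langle G,B\rangle\})$, I would build a game arena $\mathcal{G}_{\aA}$ in which Eve (playing the automaton) chooses a node label and a transition while Adam (playing the pathfinder) chooses a direction: the vertices owned by Eve are the states $q\in Q$; from $q$ she moves to an Adam vertex $(q,\sigma,t)$ for any $\sigma\in\Sigma$, $d\in D$ with $d\geq 1$, and $t=(q_1,\dots,q_d)\in\delta(q,\sigma,d)$; and from $(q,\sigma,t)$ Adam moves to $q_i$ for an index $i\in[1,d]$ of his choice. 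A move of Eve that uses an arity-$0$ transition with $\varepsilon\in\delta(q,\sigma,0)$ leads instead to a distinguished sink $\top$ carrying a self-loop, which models a leaf of the accepted tree; if no transition at all is applicable at $q$, Eve is stuck and loses. Eve wins a play if it reaches $\top$ or if the sequence of states it traverses satisfies $\langle G,B\rangle$.

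The key observation is that $\mathcal{G}_{\aA}$ has size polynomial in $|\aA|$ — each Eve vertex $q$ has at most $\sum_{\sigma,d}|\delta(q,\sigma,d)|$ outgoing moves and the branching degrees are bounded by the arities listed in $D$, all of which are accounted for in $|\aA|$ — and that the winning condition is a parity condition using only three priorities. Assign priority $3$ to the states in $B$, priority $2$ to the states in $G\setminus B$ and to the loop at $\top$, and priority $1$ to every remaining state, and let Eve win an infinite play iff the highest priority occurring infinitely often is even. A short case analysis matches this with the one-pair Rabin condition: a play visiting $B$ infinitely often has maximal recurring priority $3$ and is lost by Eve; a play that eventually avoids $B$ but meets $G$ infinitely often has maximal recurring priority $2$ and is won; and a play that eventually stays inside $Q\setminus(G\cup B)$ has maximal recurring priority $1$ and is lost — exactly ``$inf(\pi)\cap G\neq\emptyset$ and $inf(\pi)\cap B=\emptyset$''.

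It then remains to prove $L(\aA)\neq\emptyset$ iff Eve wins $\mathcal{G}_{\aA}$ from $q^0$, and to invoke the algorithmics. For the forward direction, an accepting run of $\aA$ on a tree directly describes a winning strategy for Eve. For the converse, positional determinacy of parity games gives Eve a positional winning strategy, and unfolding it from $q^0$ produces a regular $\Sigma$-labeled $D$-tree together with an accepting run of $\aA$ on it — the leaf-moves create precisely the leaves sanctioned by the clause $\delta(q,\sigma,0)=\{\varepsilon\}$. Finally, parity games with a constant number of priorities are solvable in polynomial time (for instance by the recursive McNaughton--Zielonka algorithm, whose running time is polynomial once the number of priorities is bounded; for three priorities this amounts to an alternating-reachability computation nested inside a B\"uchi / co-B\"uchi game), so the nonemptiness problem for one-pair Rabin tree automata lies in polynomial time.

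I expect the main obstacle to be getting the correspondence between the game and the automaton exactly right in the presence of the nonstandard leaf condition and of plays that get stuck, and ensuring the arena really stays of polynomial size rather than inheriting a blow-up from the arities in $D$; once the reduction is in place, the solvability of bounded-index parity games is a black box. A purely combinatorial route — characterising nonemptiness by the existence of a reachable ``$B$-free, $G$-recurrent'' core of states and computing it by nested fixpoints over $\delta$ — would also work, but essentially re-derives the same algorithm, so I would present the reduction above.
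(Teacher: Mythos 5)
Your proposal is correct, but note that the paper never proves this proposition at all: it is imported verbatim from \cite{EmersonJ99}, so there is no in-paper argument to match. What you give is a legitimate self-contained derivation, and it is essentially the modern formulation of the argument behind the cited result: a one-pair Rabin condition is exactly a parity condition with three priorities, nonemptiness of the automaton is equivalent to Eve winning the induced membership game on a polynomial-size arena, positional determinacy (on the finite arena) turns a winning strategy back into a regular accepted tree, and parity games with a fixed number of priorities are solvable in polynomial time -- which for one pair collapses to the nested B\"uchi/co-B\"uchi fixpoint computation you mention as the alternative route, i.e. the two options in your last paragraph are the same algorithm in two guises. Your handling of the paper's nonstandard conventions is also sound: the leaf clause causes no discrepancy because $\delta(q,\sigma,0)\subseteq\{\varepsilon\}$ makes ``$\varepsilon\in\delta(q,\sigma,0)$'' and ``$\delta(q,\sigma,0)=\{\varepsilon\}$'' coincide, states in $G\cap B$ correctly receive the odd priority, Eve being stuck correctly models a state from which no labeled subtree can be accepted, and the arena stays polynomial because each tuple in $\delta(q,\sigma,d)$ already contributes $d$ to the size of the automaton. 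The only thing the citation buys over your proof is the stronger statement of \cite{EmersonJ99} (polynomial time for any fixed number of pairs, with an explicit $(nk)^{O(k)}$-type bound), which the paper does not need since one pair suffices for its translation from hesitant automata.
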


Together with the following translation, this yields Theorem \ref{theo:kWPHAA-complete} for the case of nonsymmetric \kWPHAA.
\begin{lemma}\label{lem:automata_translation}
For every \kWPHAA $\aA=(Q_{\aA},\Sigma,D,q^0_{\aA},\delta_{\aA},\langle G_{\aA},B_{\aA}\rangle)$, there is a nondeterministic Rabin tree automaton $\aB=(Q_{\aB},\Sigma,D,q^0_{\aB},\delta_{\aB},\{\langle G_{\aB},B_{\aB}\rangle\})$, such that $L(\aA)=L(\aB)$ and the number of states of \aB is at most doubly exponential in $|Q_{\aA}|$. Moreover, \aB can be constructed from \aA in exponential space.
\end{lemma}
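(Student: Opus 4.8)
The plan is to eliminate alternation and pebbles in two coordinated passes, ending with a nondeterministic Rabin tree automaton of the required size, and then argue about the acceptance condition. The core of the construction is the classical subset/strategy construction (Miyano--Hayashi, or rather its two-way generalization as used by Vardi and by Bozzelli), adapted to keep track of the pebble stack.

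First I would make precise the notion of a \emph{macro-configuration}. Because $\aA$ is hesitant, along any path of a run only finitely many ``new'' pebbles are ever dropped (at most $k$, reused in a stack-wise fashion), and between consecutive pebble events the automaton behaves like a two-way hesitant automaton on a fixed subtree. Following the ``stunting'' idea for two-way automata, I would first show that a two-way \HAA can be converted into an equivalent one-way nondeterministic B\"uchi (hence Rabin-with-one-pair) automaton with an exponential blow-up: the one-way automaton guesses, at each node, the set of states in which the two-way automaton visits that node, together with a consistent ``annotation'' certifying that the up-down loops are handled and that every universal obligation is eventually discharged; acceptance is captured by a single Rabin pair because the acceptance condition of a \HAA, once a path gets trapped in an existential or universal component, is a single B\"uchi/co-B\"uchi condition, and the partial order on the components is finite and fixed, so the whole thing is equivalent to a one-pair Rabin (parity-index~2) condition on the product with the finite ``which component am I trapped in'' information.

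Next I would lift this to pebbles. The key observation is that a weak pebble can only be seen while the automaton is strictly below it, and a \kWPHAA uses its (at most $k$) pebbles as a stack. So I would process the pebble levels recursively: an automaton computing ``from a node $x$ with pebble-stack top at $x$, the set of states in which the subautomaton accepts the subtree $T_x$'' can be built by one application of the two-way-to-one-way translation above, using as an oracle the analogous sets for the next pebble level. Formally, define for each $i\le k$ a finite family $S_i$ of ``summary'' functions (mapping a node, together with the local branching and label, to the set of achievable states); $S_k$ is computed directly, and $S_{i}$ is obtained from $S_{i+1}$ by one exponential blow-up. Since $k$ is a constant for a fixed automaton, iterating $k$ times yields a total blow-up that is a $k$-fold exponential in $|Q_\aA|$ --- but here I must be careful: the lemma claims only \emph{doubly} exponential. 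The resolution is that the exponential at each pebble level is not in $|Q_\aA|$ raised to the power of the level, but rather each level multiplies the exponent by a fixed polynomial factor, so after $k$ levels the state set is $2^{|Q_\aA|^{O(k)}}$, which is doubly exponential in $|Q_\aA|$ for fixed $k$ (and even for $k$ polynomial in $|Q_\aA|$, $2^{|Q_\aA|^{O(k)}} = 2^{2^{O(\log|Q_\aA|\cdot k)}}$ is still doubly exponential in the size of $\aA$). I would state this size bookkeeping carefully and verify that the acceptance condition stays a single Rabin pair throughout --- this works because at each level the only genuinely infinitary behavior is a single B\"uchi/co-B\"uchi condition inherited from the hesitant structure, and the subset construction preserves this via the standard Miyano--Hayashi ``breakpoint'' trick, which converts a co-B\"uchi-like obligation into a B\"uchi condition on an enlarged but still singly-exponential state set.

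Finally I would address the ``exponential space'' constructibility claim: although $\aB$ has doubly-exponentially many states, each state is an object of exponential size (a subset of a singly-exponential set, or a summary function), so it can be written down and the transition relation checked in exponential space; one constructs $\aB$ on the fly, enumerating states and transitions without ever holding the whole automaton in memory, and each local consistency check (that a guessed set of successor states is compatible with $\delta_\aA$ and with the pebble-stack discipline) is itself an exponential-space computation. The main obstacle I expect is the size accounting at the pebble recursion: one must phrase the per-level blow-up so that it compounds to a double exponential rather than a $k$-fold exponential, which forces the summaries at level $i$ to be functions into sets of size exponential in $|Q_\aA|$ (not in the size of the level-$(i+1)$ summaries), i.e. one must re-collapse the summary alphabet back to something of bounded ``base'' at every level rather than letting it grow as a tower. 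Getting the invariant for this collapse right --- essentially, that a summary only needs to record a subset of $Q_\aA$ together with $O(1)$ bits of mode information, independent of the deeper levels --- is the crux of the proof.
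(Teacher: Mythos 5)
You have the right general shape of the construction (guess-the-run annotations at each node, a breakpoint-style mechanism to keep a single Rabin pair, and per-pebble-level summaries), but you leave the decisive step open: you name the ``collapse'' invariant --- that the summary attached to a pebble level must be an object over $Q_{\aA}$ only, independent of the deeper levels --- as the crux and then do not establish it or even say what the summary concretely is. This is exactly where the paper does the real work. There, $\aB$'s state at a node is a single tuple $(X_0,\ldots,X_k,\rho)$ where each $X_i$ is a \emph{set} of tuples $(S,uS,dS,rS,uP,dP,auP,adP,rdP,C,aC)$ consisting of subsets of $Q_{\aA}$ and binary relations over $Q_{\aA}$; the interaction between pebble level $i$ and level $i+1$ is funnelled entirely through the relation $C$ (pairs $(p,q)$ such that a run segment starts at the current node with the last pebble just dropped there in state $p$ and ends by lifting that pebble in state $q$) together with its acceptance-annotated variant $aC$. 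Because weak pebbles forbid moving above the pebble, this relation over $Q_{\aA}$ is a complete interface, so each tuple has only $2^{O(n^2)}$ possible values and the whole state space is $(2^{2^{O(n^2)}})^{k+1}$, i.e.\ doubly exponential (with $k\le n$). Your level-by-level ``oracle'' recursion could be made to work only after this interface is identified, and note that subsets of $Q_{\aA}$ plus $O(1)$ bits do not suffice: you need \emph{relations} (entry/exit pairs for downward excursions, upward paths, and pebble-lift segments), plus the annotated copies recording whether $G_{\aA}$/$B_{\aA}$ states were seen, plus request sets for the breakpoint argument.

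Two further points you gloss over. First, guessing ``how $\aA$ behaved at $x$'' as a function of the configuration is only sound for \emph{homogeneous} runs; the paper justifies this reduction via memoryless winning strategies for parity games (the hesitant condition being a special parity condition), and without this step the annotation semantics is not justified. Second, the acceptance argument is not just Miyano--Hayashi on a co-B\"uchi obligation: one must separately handle \emph{bounded} paths of the run (up/down loops that never progress down the input tree), which $\aB$ checks locally through the annotated relations $auP$, $adP$, $aC$, and \emph{unbounded} paths, for which the paper uses an interval technique (the request sets $rS$, $rdP$, reinitialized when empty, with $G_{\aB}$ the states where they are empty and a K\"onig's-Lemma argument for correctness), while $B_{\aB}$ captures the universal side. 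So while your outline points in a workable direction, the proposal as written has a genuine gap at precisely the point that makes the lemma true with a doubly exponential (rather than $k$-fold exponential) bound.
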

\begin{proof}
We start with the observation that we can restrict to homogeneous runs of \aA, i.e., to runs where \aA always behaves in the same way when being in the same configuration. Formally, we call a run $r$ \emph{homogeneous}, if whenever two nodes of $r$ are labeled by the same configuration, then the set of labels occurring at there children is also the same. If \aA has an accepting run, it also has an accepting homogeneous run. This follows immediately from the existence of memoryless winning strategies for two-player parity games on infinite graphs \cite{EmersonJ91,Zielonka98}.

Next, we describe how to construct the automaton \aB from \aA. When running on a tree $T$, \aB will guess a homogeneous run $r$ of \aA and accept if and only if the guessed run $r$ is accepting. Of course, \aB cannot guess $r$ at once. Instead, \aB will guess at every node $x\in T$ how $\aA$ behaved at $x$ during $r$. The consistency of these guesses has to be checked by \aB. Additionally, \aB has to check whether the guessed run is accepting.

To perform these tasks, \aB needs to maintain some information about the guessed run $r$ of \aA. More precisely, the state taken by \aB at a node $x\in T$ will contain several sets of states of \aA that describe $r$ at $x$. Additionally, there will be some information that will be used to verify that all infinite paths in $r$ going through $x$ satisfy the acceptance condition of \aA. This information will not uniquely determine $r$, but it will be sufficient to ensure the existence of an accepting run.
A description of the information \aB stores in its states 
along with a formal definition of \aB can be found in the appendix.
\qed
\end{proof}

To transfer this result to the case of symmetric automata, we have to deal with the fact that these automata accept trees of arbitrary, even infinite branching degree. But we can show that a symmetric \kWPHAA always accepts a tree whose branching degree is at most exponential in the size of the automaton.
\begin{proposition}\label{prop:bounded_branching}
For every symmetric \kWPHAA \aA with $n$ states:\\ If $L(A)\neq\emptyset$, then \aA accepts a tree whose branching degree is at most $2^{n^2\cdot(k+1)}$.
\end{proposition}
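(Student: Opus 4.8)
The plan is to show that an accepting run of a symmetric \kWPHAA \aA can be ``thinned out'' so that at every node only boundedly many children are actually used, and then to relabel so that the resulting tree has bounded branching degree. Fix an accepting run $r$ of \aA on some tree $(T,V)$; by the argument in the proof of Lemma~\ref{lem:automata_translation} we may take $r$ to be homogeneous, i.e. \aA behaves identically at nodes of $T$ carrying the same configuration. The key observation is that at a node $x\in T$, the only way \aA inspects the children of $x$ is through tuples $(\mydiamond,p)$ and $(\mybox,p)$ in a satisfying set $Y$ for $\delta(q,V(x),b)$. A $(\mydiamond,p)$-obligation is discharged by a single child, so across all configurations $(q,x,\bar y)$ that occur at $x$ in $r$ it suffices to keep, for each such configuration and each $(\mydiamond,p)$ in the chosen $Y$, one witnessing child. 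A $(\mybox,p)$-obligation, by contrast, ranges over \emph{all} children, so it imposes no selection but must be respected by whichever children we retain.

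Concretely, first I would bound the number of distinct configurations that can appear at a single node $x$. A configuration is $(q,x,\bar y)$ with $q\in Q$ and $\bar y$ a stack-wise pebble placement of the $k$ pebbles; since a weak pebble is only ever placed at an ancestor-or-current position and is lifted before the automaton moves back above it, and since the automaton can only ``see'' the topmost pebble, the relevant information in $\bar y$ that influences the behaviour at $x$ is just: how many pebbles are currently down, and whether the topmost one sits exactly at $x$ (this is the bit $b$). A more careful count, tracking which ancestors hold pebbles, still gives at most $(k+1)$ choices of $\mpp{\bar y}$ times a bounded amount of positional data, and combined with $|Q|=n$ one reaches a bound of the form $n\cdot(\text{poly in the pebble data})$; I would massage the estimate so the number of configurations per node, times the number of $\mydiamond$-successors demanded by a single transition (at most $n$), is at most $2^{n^2\cdot(k+1)}$. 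Second, having fixed at each node $x$ the finite set $C_x$ of retained children (the $\mydiamond$-witnesses, one per (configuration, $\mydiamond$-tuple) pair), I would define $T'$ to be the subtree of $T$ spanned by always descending into children in $C_x$, and check that the restriction of $r$ to nodes whose $T$-coordinate lies in $T'$ is still a run: $\mydiamond$-obligations are met by construction, and $\mybox$-obligations are met because they were met over \emph{all} children of $x$ in $T$, hence a fortiori over the retained ones. The $(-1,0,\text{root})$-tuples and the drop/lift transitions do not touch children of $x$, so they are unaffected. Since $r$ was accepting and we have only deleted paths, every remaining infinite path still gets trapped in the same existential/universal set and still satisfies $\langle G,B\rangle$; finite paths still end in \true. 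Finally, renumbering the children in each $C_x$ by an initial segment of $\mathbb N$ turns $T'$ into a genuine $D'$-tree for $D'=\{0,1,\dots,2^{n^2\cdot(k+1)}\}$, and \aA still accepts it (symmetric automata are insensitive to the identity of children).

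The main obstacle I anticipate is the counting in the first step: one has to argue carefully that the behaviour of \aA at a node $x$ depends on the pebble placement $\bar y$ only through boundedly much data, so that ``number of configurations occurring at $x$ in a homogeneous run'' is genuinely bounded by something like $n\cdot 2^{n^2\cdot k}$ rather than by the a priori enormous number of stack-wise placements (which involves the depth of $x$). This is exactly the place where weakness of the pebbles is essential: because the automaton only sees the top pebble and cannot climb above it, the positions of the lower pebbles are irrelevant to future behaviour, so they can be quotiented away; a non-weak automaton would not admit this reduction, and indeed the footnote by Schwentick signals that getting the exponent right here (and thereby avoiding any dependence of the pebble \emph{count} on the formula) was the delicate last-minute improvement. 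Once the per-node configuration count is pinned down to the stated form, the thinning argument and the verification that $r\restriction T'$ remains accepting are routine.
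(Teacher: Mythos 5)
There is a genuine gap, and it sits exactly at the step you yourself flag as ``the main obstacle'': bounding the number of configurations of \aA\ that occur at a fixed tree node $x$ in a (homogeneous) accepting run. Your proposed resolution --- that the positions of the already placed pebbles can be quotiented away because the automaton only sees the topmost pebble --- is not sound. The transition taken at $x$ indeed depends only on $(q,V(x),b)$, but the \emph{continuation} of the run from a configuration $(q,x,\bar y)$ depends on the actual pebble positions: after the topmost pebble is lifted, the automaton may move up towards the next pebble, so the exact ancestors at which the lower pebbles sit (and even the exact position of the topmost pebble, not just the bit $b$) determine which part of the tree the rest of the run lives in and hence whether an accepting continuation exists. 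Consequently, two configurations $(q,x,\bar y_1)$ and $(q,x,\bar y_2)$ with the same $b$ and the same $\mpp{\bar y}$ cannot in general share a $\mydiamond$-witness child, the number of configurations occurring at $x$ can grow with the depth of $x$ (the pebbles sit at nested ancestors of $x$, so there are roughly $\binom{\mathrm{depth}(x)}{k}$ placements), and the thinning argument does not produce a bound depending only on $n$ and $k$. The correct parts of your sketch (homogeneity, the fact that $\mybox$-obligations and the $-1/0/\mathrm{root}$ and drop/lift moves survive pruning, the final renumbering) do not repair this counting step.

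The paper avoids the per-node configuration count altogether by a subtree-merging argument: to each node $x$ it assigns a \emph{type}, namely the $k+1$ functions $f_i^x:Q\rightarrow 2^Q$ recording, for each number of pebbles still in hand, in which states the homogeneous run can leave the subtree $T_x$ through the parent of $x$ when it enters $T_x$ in a given state (these are exactly the ``downward paths'' of Lemma~\ref{lem:automata_translation}). A type takes $n^2\cdot(k+1)$ bits, so there are at most $2^{n^2\cdot(k+1)}$ of them; whenever two siblings have the same type, one subtree is deleted and all $\mydiamond$-moves into it are redirected into the other, which preserves acceptance because the surviving sibling realizes the same entry/exit behaviour and its infinite run paths were already accepting. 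If you want to keep your witness-selection viewpoint, you would in effect have to select witnesses per type rather than per configuration, which collapses into the paper's argument; as written, your counting claim ($n\cdot 2^{n^2 k}$ configurations per node) is unsupported and, for two-way automata with pebbles, false.
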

This can be proved similarly to a corresponding result for symmetric alternating one-pebble B\"uchi automata in \cite{Weber09}. See the appendix for details.

Now, we can adapt Lemma \ref{lem:automata_translation} to symmetric \kWPHAA simply by considering every possible branching degree smaller than the bound provided by Proposition \ref{prop:bounded_branching}. This yields the upper bound of Theorem \ref{theo:kWPHAA-complete} for symmetric \kWPHAA. The matching lower bound follows from the \TWOEXPTIME-hardness of \pectlplusn via Lemma \ref{lem:logictoautomata} and Proposition \ref{prop:bounded_branching}.

Reviewing the proof of Lemma \ref{lem:automata_translation}, we observe that the resulting NRA \aB is only of exponential size if \aA does not use any pebbles, i.e., if \aA is a (symmetric) two-way HAA. This yields the following theorem, which has been proved before by Bozzelli for a slightly more restricted model \cite{Bozzelli08}.
\begin{theorem}\label{theo:TWHAA-complete}
The nonemptiness problem for (symmetric) two-way hesitant alternating tree automata is complete for \EXPTIME.
\end{theorem}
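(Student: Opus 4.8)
The plan is to obtain both bounds essentially for free from the machinery already developed for the pebble case. For the lower bound, I would recall that satisfiability of plain \ctl is \EXPTIME-hard and that the construction in the proof of Theorem~\ref{theo:pectl} translates a \pectl-formula (in particular, a \ctl-formula) into a symmetric two-way \HAA of linear size whose nonemptiness is equivalent to satisfiability of the formula; hence nonemptiness of symmetric two-way \HAA is \EXPTIME-hard, and the same reduction works for the nonsymmetric model since the \ctl-automaton can be taken over a fixed set of arities (or one applies the bounded-branching argument below to fix $D$). For the upper bound, I would invoke Lemma~\ref{lem:automata_translation}: a two-way \HAA is precisely a \kWPHAA with $k=0$, so it has no pebbles, and one re-examines the size bound in that lemma in the pebble-free case.

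The key observation is that the doubly-exponential blow-up in Lemma~\ref{lem:automata_translation} is caused solely by the pebbles: the state of the simulating \NRA \aB records, for each node, the set of "active" configurations $(q,x,\bar y)$ of \aA that the guessed homogeneous run passes through at that node, together with bookkeeping for the hesitant acceptance condition. When $k=0$ there is no pebble component $\bar y$, and moreover the position $x$ is implicit in the current tree node, so an active configuration is just a state $q\in Q_{\aA}$. Consequently the relevant information \aB must store is a bounded collection of subsets of $Q_{\aA}$ (the active states, plus the auxiliary sets used to certify that every infinite path of the guessed run gets trapped in an existential or universal set and meets $G_{\aA}$ infinitely often, resp. avoids $B_{\aA}$ eventually). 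This is singly exponential in $|Q_{\aA}|$, and \aB can be constructed in exponential time; combined with the \PTIME nonemptiness test for single-pair \NRA (the Proposition cited from \cite{EmersonJ99}), this gives an \EXPTIME decision procedure.

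For the symmetric case I would additionally apply Proposition~\ref{prop:bounded_branching} with $k=0$: a symmetric two-way \HAA with $n$ states, if nonempty, accepts a tree of branching degree at most $2^{n^2}$. So one fixes $D=\{1,\ldots,2^{n^2}\}$, reads the symmetric automaton as a nonsymmetric one over this $D$ (replacing each $(\Diamond,q)$ by $\bigvee_{c\in D}(c,q)$ and each $(\Box,q)$ by $\bigwedge_{c\in D}(c,q)$, which blows up the transition function by only a singly-exponential factor), and runs the argument above; the resulting \NRA is still of exponential size, and the final nonemptiness check stays in \EXPTIME. I expect the only real work to be the bookkeeping step: verifying carefully that, once the pebble stack and the explicit node coordinate are dropped from the state description of \aB in Lemma~\ref{lem:automata_translation}, the remaining information is a family of subsets of $Q_{\aA}$ of bounded size, so that $|Q_{\aB}| = 2^{O(|Q_{\aA}|)}$ rather than doubly exponential — everything else is a direct specialization of results already proved. (As noted in the excerpt, this matches Bozzelli's theorem \cite{Bozzelli08} for his slightly more restricted model, which serves as a sanity check on the bound.)
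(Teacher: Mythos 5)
Your proposal is correct and follows essentially the same route as the paper, which obtains this theorem precisely by re-examining Lemma~\ref{lem:automata_translation} in the pebble-free case (where the resulting single-pair Rabin automaton is only exponentially large), handling the symmetric case via the branching bound of Proposition~\ref{prop:bounded_branching}, and getting hardness from \ctl. The only quibble is that the states of \aB also record relations over $Q_{\aA}$ (the sets $uP,dP,auP,\ldots$), so the bound is $2^{O(|Q_{\aA}|^2)}$ rather than $2^{O(|Q_{\aA}|)}$, which of course still yields \EXPTIME.
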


\section{Conclusions}

In this paper, we considered the branching-time logics \ub and \ctl and their extensions by Boolean combinations of path formulas, fairness, past modalities, and forgettable past. While we think that this set of extensions is a reasonable choice, there are certainly other extensions or restrictions, such as existential or universal fragments, that deserve attention.

We gave a complete classification of the complexity of the satisfiability problem for these logics, obtaining a dichotomy between \EXPTIME-complete and \TWOEXPTIME-complete logics. There are many open questions concerning the expressive power of these logics and the complexity of their model checking problems that should be addressed in future work.

\bibliographystyle{abbrv}
\bibliography{HBTL}

\appendix

\section{Comment on the disproved result from \cite{LaroussinieS00}}

As we have seen in Theorem \ref{theo:ubpn}, the satisfiability problem for \ubpn is \TWOEXPTIME-hard. This contradicts Theorem 4.1 of \cite{LaroussinieS00}, where membership in \EXPTIME is claimed for \pctln. Therefore, we want to point out where we believe the proof of \cite{LaroussinieS00} to be wrong.

A comment on notation: Our logic \pctln is called \pctl in \cite{LaroussinieS00} and $\ctl_{lp}$ is used to refer to our \pctl. In the following, we use the notation of \cite{LaroussinieS00}.

We believe that the proof of Lemma 4.3 in \cite{LaroussinieS00} is wrong. This Lemma is used to prove Theorem 4.1.

In this lemma, the extentions of \pctl and $\ctl_{lp}$ with outermost existential quantification of propositions, called \eqpctl and $\eqctl_{lp}$, are considered. The claim of Lemma 4.3 is that every \eqpctl-formula $\varphi$ can be transferred in linear time into an $\eqctl_{lp}$-formula $\tilde{\varphi}$ such that $\varphi\equiv^*\tilde{\varphi}$. ($\equiv^*$ refers to equivalence over acyclic structures, see \cite{LaroussinieS00} for details.) The authors first argue, that for every subformula $\myN\psi$ where $\psi$ contains no $\myN$, there is a \ctl-formula $\bar{\psi}$ and some new propositions $p_1',\ldots,p_k'$ such that 
\begin{equation}
\myN\psi\equiv^*\exists p_1'\cdots\exists p_k'\bar{\psi}.\label{eq:a}
\end{equation}
So far, the proof appears to be correct.

But next, they claim that they can use another proposition $p_{\bar{\psi}}'$ to obtain the following euivalence:
\begin{equation}
\varphi[\myN\psi] \equiv^* \exists p_1'\cdots\exists p_k'\exists p_{\bar{\psi}}'(\varphi[p_{\bar{\psi}}']\land\ag(p_{\bar{\psi}}'\iff\bar{\psi}))\label{eq:b}
\end{equation}

To see why Equation~\eqref{eq:b} is wrong, let us go a step back. We can use Equation~\ref{eq:a} to substitute $\myN\psi$ in $\varphi$:
\begin{equation}
\varphi[\myN\psi] \equiv^* \varphi[\exists p_1'\cdots\exists p_k'\bar{\psi}]\label{eq:c}
\end{equation}
Note that the formula on the right of Equation~\eqref{eq:c} is not a \eqpctl-formula as the quantifiers appear inside the formula. This seems to be the reason why the proposition $p_{\bar{\psi}}'$ is introduced. But this should result in the following equation, where the right-hand side is again no \eqpctl-formula.
\begin{equation}
\varphi[\myN\psi] \equiv^* \exists p_{\bar{\psi}}'(\varphi[p_{\bar{\psi}}']\land\ag(p_{\bar{\psi}}'\iff\exists p_1'\cdots\exists p_k'\bar{\psi}))\label{eq:d}
\end{equation}
But the formula on the right-hand  side of Equation~\eqref{eq:d} is clearly not equivalent to the one on the right-hand side of Equation~\eqref{eq:b}.

To put this more intuitive, note that the new propositions in
Equation~\eqref{eq:a} are used to mark those states where certain
subformulas of $\psi$ including past operators hold (see lemma 4.3 of
\cite{LaroussinieS00}). This information clearly depends on where the
preceding $\myN$-operator was used. E.g., whether $\myY\true$ holds at
a state clearly depends on whether $\myN$ was applied to this state before. This dependency is reflected in Equation~\eqref{eq:d} by newly quantifying the propositions $p_i'$ at every state when establishing the equivalence between $p_{\bar{\psi}}'$ and $\bar{\psi}$. I.e., the quantifiers are inside the $\ag$-subformula. Drawing them out, as done in \cite{LaroussinieS00}, corresponds to ignoring the dependency described above.

\section{Proof of Proposition \ref{theo:ubpn}}

For sake of completeness, we give the omitted formulas.
\begin{align*}
	\varphi_{s} & = q_{\#}\land \ag(\neg(q\land q_{\#})) \land\af(q_{\#}\land\ax q)\land\ag(q\rightarrow\ag q) \land\ex(\neg q)\\
	 \varphi_{n} & = \ag([q_{\#}\rightarrow(\ax(\neg q\land\neg q_{\#}\land\bigwedge_{i=0}^{n-1}\neg q_i)\lor\ax q)]\land\\
	 &\qquad\quad\;\:[(\neg q\land \neg q_{\#})\rightarrow((\bigwedge_{i=0}^{n-1}q_i\land\ax q_{\#}) \lor(\ax(\neg q\land \neg q_{\#})\land\nu))])\\
	 \nu & = \bigvee_{i=0}^{n-1}(\bigwedge_{j<i}(q_j\land\ax\neg q_j)\land \neg q_i \land \ax q_i\land\\
	 &\qquad\quad\,\bigwedge_{j>i}((q_j\rightarrow\ax q_j)\land (\neg q_j\rightarrow\ax \neg q_j)))\\
	\theta_{F} & = \ax\af(q_{\#}\land\myH(q_{\#}\lor \bigvee_{t\in F}p_t))\\
	\theta_{L} & = \af(q_{\#}\land\neg\ax q\land\ag(q_{\#}\lor q\lor \bigvee_{t\in L}p_t))
\end{align*}

\section{Proof of Proposition \ref{theo:ubplus}}
Given an instance $I=(T,H,V,F,L,n)$ of the \emph{$2^n$-corridor tiling game}, we build a \ubpn-formulas $\varphi_I$ that is satisfiable if and only if player $E$ has a winning strategy on $I$.

We encode the winning strategy as in the proof of Theorem \ref{theo:ubpn}, but for two extensions. First, as in \cite{JohannsenL03}, every state corresponding to a position in the tiling will have a \emph{copy-state} as child, i.e., a state where the same propositions hold except for a new proposition $c$ used to label the copy states. Furthermore, all children of copy-states are copy-states carrying the same information. Second, we use the symbols $e_1,e_2,e_3$ to introduce a numbering modulo three of the rows.

Using the following abbreviations,
\begin{align*}
	e & = e_1\lor e_2\lor e_3\\
	\psi_e & = (e_1\leftrightarrow\myX e_1)\land(e_2\leftrightarrow\myX e_2)\land(e_3\leftrightarrow\myX e_3)\\
	\psi_1 & = \myF c\land ((e_1\land \myF e_2\land\neg \myF e_3)\lor(e_2\land \myF e_3\land\neg \myF e_1)\lor(e_3\land \myF e_1\land\neg \myF e_2))
\end{align*}
we obtain the formula $\varphi_I$, where the purpose of each subformula $\theta_X$ is the same as in the proof Theorem \ref{theo:ubpn}.

\begin{align*}
\varphi_I & = \bigwedge_{i=1}^{7}\varphi_i\land\ag(e\rightarrow(\theta_H\land\theta_V\land\theta_A))\land\theta_{A'}\land\theta_F\land\theta_L\displaybreak[1]\\
\varphi_1 & = q_{\#}\land\ex \true\land \ax e_1\displaybreak[1]\\
\varphi_2 & = \ag(( q_{\#}\land\neg q\land\neg c\land\neg e_1\land\neg e_2\land \neg e_3)\\
 & \qquad\;\lor(\neg q_{\#}\land q\land\neg c\land\neg e_1\land\neg e_2\land \neg e_3)\\
 & \qquad\;\lor(\neg q_{\#}\land\neg q\land c\land\neg e_1\land\neg e_2\land \neg e_3)\\
 & \qquad\;\lor(\neg q_{\#}\land\neg q\land\neg c\land e_1\land\neg e_2\land \neg e_3)\\
 & \qquad\;\lor(\neg q_{\#}\land\neg q\land\neg c\land\neg e_1\land e_2\land \neg e_3)\\
 & \qquad\;\lor(\neg q_{\#}\land\neg q\land\neg c\land\neg e_1\land\neg e_2\land e_3))\displaybreak[1]\\
\varphi_3 & = \ag(q_{\#}\rightarrow(\ax q\lor \ax(e\land\bigwedge_{i=0}^{n-1}\neg q_i)))\land\ag(q\rightarrow\ag q)\land\\
 & \quad\;\,\ag(c\rightarrow\ag c)\displaybreak[1]\\
\varphi_4 & = \ag(e\rightarrow[\ex(e\lor q_{\#})\land\ex c\land \bigvee_{t\in T}(p_t\land\bigwedge_{t'\in T,\: t\neq t'}\neg p_{t'})])\displaybreak[1]\\
\varphi_5 & = \ag((e\lor c)\rightarrow\myA(\myX c\rightarrow[\psi_e\land\bigwedge_{i=0}^{n-1}(q_i\leftrightarrow\myX q_i)\land\bigwedge_{t\in T}(p_t\leftrightarrow \myX p_t)]))\displaybreak[1]\\
\varphi_6 & = \ag((e\land\neg\bigwedge_{i=0}^{n-1}q_i)\rightarrow\myA(\myX e\rightarrow[\psi_e\land\bigvee_{i=0}^{n-1}(\bigwedge_{j<i}(q_j\land\myX\neg q_j)\land\neg q_i\land \myX q_i\land\\
 & \qquad\qquad\qquad\qquad\qquad\qquad\qquad\qquad\qquad\quad\bigwedge_{j>i}(q_j\leftrightarrow\myX q_j))]))\displaybreak[1]\\
\varphi_7 & = \ag((e\land\bigwedge_{i=0}^{n-1}q_i)\rightarrow\myA[\myX(c\lor q_{\#})\land(e_1\rightarrow\ax\ax(c\lor q\lor e_2))\\
 & \qquad\qquad\qquad\land(e_2\rightarrow\ax\ax(c\lor q\lor e_3))\land(e_3\rightarrow\ax\ax(c\lor q\lor e_1))])\displaybreak[1]\\
\theta_H & = \bigwedge_{t\in T}(p_t\rightarrow\ax(e\rightarrow\bigvee_{(t,t')\in H}p_{t'}))\displaybreak[1]\\
\theta_V & = \myA([\psi_1\land\bigwedge_{i=0}^{n-1}(q_i\leftrightarrow\myF(c\land q_i))]\rightarrow\bigwedge_{t\in T}[p_t\rightarrow\bigvee_{(t,t')\in V}\myF(c\land p_{t'})])\displaybreak[1]\\
\theta_A & = (q_0\land\ef(q_{\#}\land\ex\neg q))\rightarrow\bigwedge_{(t,t')\in V}(p_t\rightarrow[\\
 & \qquad \myE(\psi_1\land\bigwedge_{i=0}^{n-1}(q_i\leftrightarrow\myF(c\land q_i))\land\myF(c\land p_{t'}))\lor\\
 & \qquad \myA([\psi_1\land\myF(c\land \neg q_0)\land\bigwedge_{i=1}^{n-1}(q_i\leftrightarrow\myF(c\land q_i))]\rightarrow\myF[c\land\bigvee_{(t'',t')\not\in H}p_{t''}])])\displaybreak[1]\\
 \theta_{A'} & = \myA(\neg\myF q_{\#}\rightarrow\myG((e\land\neg q_0)\rightarrow\bigwedge_{t\in T}(p_t\rightarrow\bigwedge_{(t,t')\in H}(\ex p_{t'}))))\displaybreak[1]\\
\theta_F & = \myA(\neg\myF e_2\rightarrow\myG(q_{\#}\lor\bigvee_{t\in F}p_t))\displaybreak[1]\\
\theta_L & = \ag(q_{\#}\rightarrow\myA([\myX e\land\myF q\land((e_1\land\neg\myF e_2)\lor(e_2\land\neg\myF e_3)\lor(e_3\land\neg\myF e_1))]\\
 & \quad\qquad\qquad\qquad\;\rightarrow\myG[q_{\#}\lor q\lor \bigvee_{t\in L}p_t]))
\end{align*}

\section{Missing Details from Section \ref{subsec:upperbound}}

This section contains the missing details from Section \ref{subsec:upperbound}: A formal definition of the normal form and the proofs of Lemma \ref{lem:normal_form} and Lemma \ref{lem:logictoautomata}.

\begin{definition}\label{def:nf}
A \pectlplusn-formula is in \emph{normal form}, if it can be generated by the following context free grammar.
\begin{eqnarray*}
 S_1 & \rightarrow & P \mid S_1\land S_1 \mid \neg S_1 \mid \ex S_1 \mid \myE(S_1 \myU S_1) \mid \myA(S_1 \myU S_1) \mid \\
 & & \efinf S_1 \mid \myY S_1 \mid S_1\myS S_1 \mid \myN S_2\\
 S_2 & \rightarrow & P \mid S_2\land S_2 \mid \neg S_2 \mid \ex S_2 \mid \myE(S_2 \myU S_2) \mid \myA(S_2 \myU S_2) \mid \\
 & & \efinf S_2 \mid \myY S_2 \mid S_2\myS S_2 \mid \myE S_3 \mid \myA S_3\\
 S_3 & \rightarrow & S_3\land S_3 \mid S_3\lor S_3 \mid \myX S_4 \mid \Finf S_4 \mid S_4 \myU S_4 \mid \myY S_4 \mid S_4 \myS S_4\mid\\
  &  &  \neg\myX S_4 \mid \neg\Finf S_4 \mid \neg(S_4 \myU S_4) \mid \neg\myY S_4 \mid \neg(S_4 \myS S_4)\\
 S_4 & \rightarrow & P \mid S_4\land S_4 \mid \neg S_4 \mid \ex S_4 \mid \myE(S_4 \myU S_4) \mid \myA(S_4 \myU S_4) \mid \\
 & & \efinf S_4 \mid \myY S_4 \mid S_4\myS S_4\\
 P & \rightarrow & p\qquad\text{for all }p\in\prop
\end{eqnarray*}
\end{definition}

Please observe that a formula in normal form does not contain any nesting of \myN-operators. Furthermore, path quantifiers that are followed by a Boolean combination of path formulas are not nested and such path quantifiers occur only in the scope of an \myN-operator. Moreover, Boolean combinations of path formulas are in negation normal form.

\begin{proof}[of Lemma \ref{lem:normal_form}]
The proof is by standard renaming techniques.

Let $\myN\psi$ be a subformula of $\varphi$ such that $\psi$ contains no further $\myN$-operator. We introduce a new proposition $p_{\myN\psi}$ to mark those states where $\myN\psi$ holds. Now, we replace $\myN\psi$ with $p_{\text{N}\psi}$ in $\varphi$ and add a conjunct $\ag(p_{\text{N}\psi}\leftrightarrow\myN\psi)$. The resulting formula is satisfiable if and only if $\varphi$ is satisfiable.
Repeating this for every subformula of the form $\myN\psi$ results in a formula $\varphi'\land\bigwedge\psi_i$ without nested occurrences of the $\myN$-operator.

After prefixing $\varphi'$ with a $\myN$-operator, every Boolean combination of path formulas is in the scope of some $\myN$-operator. To avoid the nesting of these Boolean combinations, we apply the same renaming technique, but inside the subformulas of the form $\myN\psi$. I.e., we substitute in $\psi$ and add conjuncts to $\psi$.

Finally, we use De Morgan's law to obtain a formula in normal form that is satisfiable if and only if $\varphi$ is satisfiable.
\qed
\end{proof}

\newpage
\begin{proof}[of Lemma \ref{lem:logictoautomata}]

We give the detailed construction along the lines described above.

As $\varphi$ is given in normal form, it can be generate by the context free grammar given in Definition \ref{def:nf}. Assuming that the rules $S_2\rightarrow\myE S_3$ and $S_2\rightarrow\myA S_3$ are used only if necessary, we call every subformula of $\varphi$ that is generated from $S_3$ a path formula and every other subformula a state formula.

Let $\psi_1,\ldots,\psi_m$ be the maximal path formulas such that $\myE\psi_j$ is a subformula of $\varphi$ and $\psi_{m+1},\ldots,\psi_{m+n}$ those where $\myA\psi_j$ is a subformula of $\varphi$. Furthermore, for every $\psi_j$ with $j\in[1,m+n]$ let $\psi_{j+m+n}$ denote the dual of $\psi_j$. We introduce for each path formula $\psi_j$ with $j\in[1,2m+2n]$ a new propositional symbol $p_j$. For every $\psi_j$, we add a conjunct to $\varphi$ to ensure that these propositions are used to label paths.
$$\varphi':=\varphi\land\ag(\bigwedge_{j\in[1,m]\cup[2m+n+1,2m+2n]}(\neg p_j\lor\eg p_j)\land\bigwedge_{j\in[m+1,2m+n]}(\neg p_j\lor\ag p_j))$$

Now, we construct a \WPHAA{2} $\aA_{\varphi'}$. The states of this automaton will not only consist of subformulas of $\varphi'$, but carry some additional information. First, $\aA_{\varphi'}$ will work in eight stages and remember in its state in which stage it is. Therefore, $\aA_{\varphi'}$ will have a state $[\psi,i]$ for every state subformula $\psi$ of $\varphi$ and every $i\in[1,8]$. When handling a Boolean combination of path formulas $\psi_j$, $\aA_{\varphi'}$ will also need to remember $j$ as it has to follow the path labeled by $p_j$. To this end, $\aA_{\varphi'}$ has states $[\psi,i,j]$, where $\psi$ is a path subformula of $\varphi$, $i\in[3,6]$, and $j\in[1,2m+2n]$.

The automaton will be in stage 1 until it handles a subformula of the form $\myN\psi$, when it will go into stage 2. I.e., $\aA_{\varphi'}$ remembers that it dropped the pebble. Stage 3 is entered when the second pebble is dropped to guess the finite prefix of a path after a subformula of the form $\myE\psi_j$ occurred. Afterwards, $\psi_j$ is evaluated in stage 4. The stages 5 and 6 serve for the same purposes, but for subformulas of the form $\myA\psi$. As soon as $\aA_{\varphi'}$  proceeds to s state subformula of $\psi$, it enters stage 7. Finally, stage 8 is entered when the second pebble is lifted.

The formulas occurring in the states of $\aA_{\varphi'}$ are not only the subformulas of $\varphi'$ and their duals. As in the proof of Theorem \ref{theo:pectl}, we need some additional formulas. E.g, for every path formula of the form $\neg(\chi\myU\psi)$, we will also need the path formulas $\myH\neg\psi$, $\myG\neg\psi$, and $\myP(\neg\chi\land)$. The actual set of states of $\aA_{\varphi'}$ can be inferred from the definition of the transition function below.

$\aA_{\varphi'}$ has the following transition rules for each $\sigma\in\Sigma=2^{\prop_{\varphi'}}$ and each $b\in\mathbb{B}$.
For $i\in\{1,2,4,6,7,8\}$:\allowdisplaybreaks[1]
\begin{align*}
	  \delta([\true,i],\sigma,b) & = \true\\
	  \delta([\false,i],\sigma,b) & = \false\\
    \delta([p,i],\sigma,b) & = \true && \mspace{-68mu}\text{if }p\in \sigma\\
    \delta([p,i],\sigma,b) & = \false && \mspace{-68mu}\text{if }p\not\in \sigma\\
    \delta([\psi\land\xi,i],\sigma,b) & = (0,[\psi,i])\land(0,[\xi,i])\\
    \delta([\psi\lor\xi,i],\sigma,b) & = (0,[\psi,i])\lor(0,[\xi,i])\\
\intertext{For $i\in\{1,2,7,8\}$:}
	  \delta([\neg\psi,i],\sigma,b) & = \overline{\delta([\psi,i],\sigma,b)}\\
	  \delta([\ex\psi,i],\sigma,b) &  = (\mydiamond,[\psi,i])\\
    \delta([\efinf\psi,i],\sigma,b) & = (\mydiamond,[\efinf\psi,i])\vee(0,[(\ex\efinf\psi)\wedge\psi,i])\\
    \delta([\myE(\chi\myU\psi),i],\sigma,b) & = (0,[\psi,i])\vee((0,[\chi,i])\wedge (\mydiamond,[\myE(\chi\myU\psi),i])\\
    \delta([\myA(\chi\myU\psi),i],\sigma,b) & = (0,[\psi,i])\vee((0,[\chi,i])\wedge (\mybox,[\myA(\chi\myU\psi),i])\\
\intertext{For $i\in\{1,2,8\}$:}
	  \delta([\myY\psi,i],\sigma,\true) & = \false \\
	  \delta([\myY\psi,i],\sigma,\false) & = (-1,[\psi,i]) \\
	  \delta([\chi\myS\psi,i],\sigma,\true) & = (0,[\psi,i])\\
	  \delta([\chi\myS\psi,i],\sigma,\false) & = (0,[\psi,i])\vee((0,[\chi,i])\wedge (-1,[\chi\myS\psi,i]))\\
\intertext{In stage 7, past operators have to be handled differently as we might have to lift the second pebble.}
	  \delta([\myY\psi,7],\sigma,\true) & = (\text{lift},[\myY\psi,8]) \\
	  \delta([\myY\psi,7],\sigma,\false) & = (-1,[\psi,7]) \\
    \delta([\chi\myS\psi,7],\sigma,\true) & = (\text{lift},[\chi\myS\psi,8])\\
    \delta([\chi\myS\psi,7],\sigma,\false) & = (0,[\psi,7])\vee((0,[\chi,7])\wedge (-1,[\chi\myS\psi,7]))\\
\intertext{In the following rules are used to switch from stage 1 to stage 2, from stage 2 to stage 3 or 5, and to guess a finite prefix of a path in stage 4 or 5 and to switch to stage 4 or 6, respectively.}
    \delta([\myN\psi,1],\sigma,\false) & = (\text{drop},[\psi,2])\\
	  \delta([\myE\psi_j,2],\sigma,b) &  = (0,[\psi_j,3,j]) \\
	  \delta([\neg\myE\psi_j,2],\sigma,b) & = (0,[\psi_{j+m+n},5,j+m+n]) \\
	  \delta([\psi_j,3,j],\sigma,b) & = (\mydiamond,[\psi_j,3,j]) && \mspace{-68mu}\text{if }p_{j}\not\in\sigma\\
	  \delta([\psi_j,3,j],\sigma,b) & = (0,[\psi_j,4,j])\vee(\mydiamond,[\psi_j,3,j]) && \mspace{-68mu}\text{if }p_{j}\in\sigma\\
	  \delta([\myA\psi_j,2],\sigma,b) & = (0,[\psi_j,5,j]) \\
	  \delta([\neg\myA\psi_j,2],\sigma,b) & = (0,[\psi_{j+m+n},3,j+m+n]) \\
	  \delta([\psi_j,5,j],\sigma,b) & = (\mybox,[\psi_j,5,j]) && \mspace{-68mu}\text{if } p_{j}\not\in\sigma\\
	  \delta([\psi_j,5,j],\sigma,b) & = (0,[\psi_j,6,j])\vee(\mybox,[\psi_j,5,j]) && \mspace{-68mu}\text{if }p_{j}\in\sigma\\
\intertext{To evaluate the Boolean combinations in stage 4 and 6, we use the following rules for $j\in\{4,6\}$. The additional state $q$ is used to check that the pebble is placed at the parent node of the current node.}
	  \delta([\myX\psi,i,j],\sigma,\false) & = ((0,[\psi,7])\land(-1,q))\lor(-1,[\myX\psi,i,j]) \\
	  \delta([\myX\psi,i,j],\sigma,\true) & = \false \\
	  \delta([\neg\myX\psi,i,j],\sigma,\false) & = ((0,[\neg\psi,7])\land(-1,q))\lor(-1,[\neg\myX\psi,i,j]) \\
	  \delta([\neg\myX\psi,i,j],\sigma,\true) & = \false \\
	  \delta(q,\sigma,\false) & = \false \\
	  \delta(q,\sigma,\true) & = \true \\
    \delta([\chi\myU\psi,i,j],\sigma,\false) & = (-1,[\chi\myU\psi,i,j])\lor\\*
     & \quad\;((0,[\psi,7])\land(-1,[\myH\chi,i,j]))\\
    \delta([\chi\myU\psi,i,j],\sigma,\true) & = (0,[\psi,7])\\
    \delta([\myH\psi,i,j],\sigma,\false) & = (0,[\psi,7])\land(-1,[\myH\psi,i,j])\\
    \delta([\myH\psi,i,j],\sigma,\true) & = (0,[\psi,7])\\
    \delta([\neg(\chi\myU\psi),i,j],\sigma,\false) & = ((0,[\myH\neg\psi,i,j])\land(\mybox,[\myG\neg\psi,i,j]))\lor\\*
     & \quad\; (-1,[\myP(\neg\chi\land\myH\neg\psi),i,j])\\
    \delta([\neg(\chi\myU\psi),i,j],\sigma,\true) & = (\mybox,[\myG\neg\psi,i,j])\lor((0,[\neg\chi,7])\land(0,[\neg\psi,7]))\\
	  \delta([\myG\psi,i,j],\sigma,b) & = (0,[\psi,7])\land(\mybox,[\myG\psi,i,j]) && \mspace{-68mu}\text{if } p_{j}\in\sigma\\
	  \delta([\myG\psi,i,j],\sigma,b) & = \true && \mspace{-68mu}\text{if } p_{j}\not\in\sigma\\
	  \delta([\myP(\neg\chi\land\myH\neg\psi),i,j],\sigma,\false) & = (0,[\neg\chi,7])\land(0,[\myH\neg\psi,i,j]) \\
	  \delta([\myP(\neg\chi\land\myH\neg\psi),i,j],\sigma,\true) & = (0,[\neg\chi,7])\land(0,[\neg\psi,7]) \\
    \delta([\Finf\psi,i,j],\sigma,b) & = ((\mydiamond,[\Finf\psi,i,j])\land(\mybox,[\Finf\psi,i,j]))\lor\\*
     & \quad\;(0,[(\myX\Finf\psi)\wedge\psi,i,j]) && \mspace{-68mu}\text{if } p_{j}\in\sigma\\
    \delta([\Finf\psi,i,j],\sigma,b) & = \true && \mspace{-68mu}\text{if } p_{j}\not\in\sigma\\
    \delta([\myX\Finf\psi,i,j],\sigma,b) & = (\mydiamond,[\Finf\psi,i,j])\land(\mybox,[\Finf\psi,i,j])\\
    \delta([\neg\Finf\psi,i,j],\sigma,b) & = (0,[\myG\neg\psi,i,j])\\
    \delta([\myY\psi,i,j],\sigma,\false) & = (-1,[\myY\psi,i,j]) \\
	  \delta([\myY\psi,i,j],\sigma,\true) & = (\text{lift},[\myY\psi,8]) \\
    \delta([\neg\myY\psi,i,j],\sigma,\false) & = (-1,[\neg\myY\psi,i,j]) \\
	  \delta([\neg\myY\psi,i,j],\sigma,\true) & = (\text{lift},[\neg\myY\psi,8]) \\
    \delta([\chi\myS\psi,i,j],\sigma,\false) & = (-1,[\chi\myS\psi,i,j])\\
    \delta([\chi\myS\psi,i,j],\sigma,\true) & = (\text{lift},[\chi\myS\psi,8])\\
    \delta([\neg(\chi\myS\psi),i,j],\sigma,\false) & = (-1,[\neg(\chi\myS\psi),i,j])\\
    \delta([\neg(\chi\myS\psi),i,j],\sigma,\true) & = (\text{lift},[\neg(\chi\myS\psi),8])
\end{align*}

The initial state of $\aA_{\varphi'}$ is $[\varphi',1]$. The set $G$ contains all states of the form $[\neg\myE(\chi\myU\psi),i]$, $[\neg\myA(\chi\myU\psi),i]$, and $[(\ex\efinf\psi)\land\psi,i]$ for all $i\in\{1,2,7,8\}$ and the states $[\myG\psi,i,j]$ for $i\in\{4,6\}$ and $j\in[1,2m+2n]$. The set $B$ contains all states of the form $[\neg\ex\efinf\psi,i]$ for all $i\in\{1,2,7,8\}$.

The partition on $Q$ and the order on the resulting sets is defined analogously to Theorem \ref{theo:pectl}. In particular, the states $[(\myX\Finf\psi)\wedge\psi,i,j]$, $[\myX\Finf\psi,i,j]$, and $[\Finf\psi,i,j]$ constitute an existential set for every choice if $i$ and $j$.

To prove correctness of our construction, we observe that $\varphi'$ is satisfiable if and only if $\varphi$ is satisfiable. The proof then proceeds by an induction on the formula structure, considering only the state subformulas. The only non standard argument is the correctness for formulas of the form $\myE\psi_j$. But this is easy to formalize given the explanation above.
\qed
\end{proof}

\section{Proof of Lemma \ref{lem:automata_translation}}

\subsection{Note on the Existence of Homogeneous Runs}
\begin{claim} Let \aA be a \kWPHAA and $T$ a tree. \aA has an accepting run on $T$ if and only if \aA has an accepting homogeneous run on $T$.
\end{claim}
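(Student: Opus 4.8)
The right-to-left direction is trivial, since a homogeneous run is by definition a run. For the converse I would recast acceptance of $\aA$ on $T$ as a two-player \emph{membership game} $\mathcal{G}$ and then invoke memoryless determinacy. The positions owned by the first player, ``Automaton'', are exactly the configurations $(q,x,\bar y)$ reachable by $\aA$ on $T$, with the initial position $(q^0,\varepsilon,\bar{y}_0)$, $\mpp{\bar{y}_0}=0$. From a position $(q,x,\bar y)$ put $d:=\degree(x)$ and $b:=\true$ iff $y_{\mpp{\bar y}}=x$. If $\delta(q,V(x),d,b)$ is $(\mathrm{drop},p)$ or $(\mathrm{lift},p)$, the move is forced and leads to the successor configuration prescribed by the semantics (and the position is immediately losing for Automaton if the drop is inapplicable because all $k$ pebbles are already placed). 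If $\delta(q,V(x),d,b)$ is a positive Boolean combination $\alpha$, Automaton picks a set $Y$ of successor tuples satisfying $\alpha$; this leads to an auxiliary position, owned by the second player ``Pathfinder'', from which Pathfinder picks some $(c,p)\in Y$, the play continuing from the configuration $(p,x\cdot c,\bar y)$; if $Y=\emptyset$ (possible exactly when $\emptyset$ satisfies $\alpha$, in particular when $\alpha=\true$) the play ends here. A finite play is won by Automaton iff it ends in this way with $\alpha=\true$; an infinite play is won by Automaton iff the sequence of states along it satisfies the hesitant acceptance condition $\langle G,B\rangle$. A standard unravelling argument shows that $\aA$ has an accepting run on $T$ iff Automaton has a winning strategy in $\mathcal{G}$ from the initial position: a winning strategy unfolds into an accepting run, and from an accepting run one reads off the winning strategy that, at a node, plays the set of directions used for its children.

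Next I would observe that $\mathcal{G}$ is a \emph{parity game} on a (possibly infinite) arena. By the definition of a \kWPHAA, $Q$ partitions into existential, universal and transient classes $Q_1,\dots,Q_m$ ordered so that transitions never increase the class index; hence every infinite play gets trapped in a single existential or universal class $Q_i$ and all of its infinitely recurring states lie in $Q_i$. Assigning priority $2$ to the states of $G$ lying in an existential class, priority $1$ to all other states of existential classes and to the states of $B$ lying in a universal class, and priority $0$ to everything else turns Automaton's winning condition into ``the largest priority seen infinitely often is even'': a play trapped in an existential class is won iff a $G$-state recurs iff its maximal recurring priority is $2$, and a play trapped in a universal class is won iff no $B$-state recurs iff that priority is $0$. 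So, by the existence of memoryless winning strategies for parity games on arbitrary graphs \cite{EmersonJ91,Zielonka98}, if Automaton wins $\mathcal{G}$ then it has a \emph{memoryless} winning strategy $\tau$, i.e.\ one whose choice of $Y$ at a position depends only on that position, hence only on the current configuration.

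Finally I would unfold $\tau$ into a run $r$ of $\aA$ on $T$: label the root of $r$ by the initial configuration and, whenever a node $v$ of $r$ carries a configuration $\kappa$ with a Boolean transition, give $v$ exactly one child per tuple in $\tau(\kappa)$, labeled as the semantics prescribes (and the unique forced child in the $\mathrm{drop}$/$\mathrm{lift}$ cases). Then $r$ is a run, every maximal path of $r$ is a play consistent with $\tau$, so every infinite path satisfies the acceptance condition and every finite path ends in a $\true$-transition; thus $r$ is accepting. And since $\tau$ is memoryless, any two nodes of $r$ carrying the same configuration $\kappa$ receive the same set $\tau(\kappa)$ of directions, hence the same set of child labels, so $r$ is homogeneous. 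The same argument applies verbatim to symmetric \kWPHAA, reading a $(\Diamond,p)$-obligation as a choice by Automaton of a child of $x$ in $T$ and a $(\Box,p)$-obligation as such a choice by Pathfinder.

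The step I expect to need the most care is the faithful set-up of $\mathcal{G}$: arranging that Automaton's positions are \emph{exactly} the configurations of $\aA$ (so that ``memoryless'' literally means ``a function of the configuration'', which is what homogeneity requires), handling terminating plays and the inapplicable-drop case correctly, and checking that the hesitant acceptance condition genuinely is a parity condition so that the cited memoryless determinacy applies. After that the unravelling is routine.
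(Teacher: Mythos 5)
Your proposal is correct and follows essentially the same route as the paper: viewing the configuration graph of $\aA$ as the arena of a two-player game, noting that the hesitant acceptance condition is a special kind of parity condition, and invoking memoryless determinacy of parity games \cite{EmersonJ91,Zielonka98} so that a memoryless winning strategy unfolds into a homogeneous accepting run. You merely spell out the game construction and the priority assignment in more detail than the paper does.
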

This claim can be proved by observing that the configuration graph of \aA can be seen as the arena of a two-player game with parity winning condition. Note that the acceptance condition of \aA is special kind of parity condition \cite{KupfermanVW00}. The existence of memoryless winning strategies in such games \cite{EmersonJ91,Zielonka98} implies that if \aA has an accepting run (i.e., a winning strategy), then \aA has also a homogeneous accepting run (i.e., a memoryless winning strategy).

\subsection{Construction of \aB}

We describe which information is stored in the states of \aB and how it is used to check existence of an accepting run of \aA. A formal definition of \aB is given afterwards.

First, \aB needs information about which states are taken by \aA during $r$ at $x$. Even more, \aB needs to know which states are taken with respect to the same placement of the pebbles. Fortunately, if two pebble placements $\bar{y}$ and $\bar{y}'$ with $\mpp{\bar{y}}=\mpp{\bar{y}'}$ give rise to the same set $S$ of states occurring at $x$, we do not have to distinguish these sets.

The states of \aB will be of the form $(X_0,\ldots,X_k,\rho)$, where $\rho\in\mathbb{B}$ and each $X_i$ is a set of tuples $(S,uS,dS,rS,uP,dP,auP,adP,rdP,C,aC)\in(2^{Q_{\aA}})^4\times (2^{Q_{\aA}\times Q_{\aA}})^7$, where the sets $S$ are the sets described above and the other components will be explained below. Note that there are at most $2^{O(n^2)}$ many different tuples, where $n=|Q_{\aA}|$. Therefore, there are at most $(2^{2^{n^2}})^{k+1}$ many different states of \aB. As we can assume that $k\leq n$, the size of \aB is at most doubly exponential in $n$.

Let $x$ be a node of $T$ and $(X_0,\ldots,X_k,\rho)$ the state taken by \aB at $x$. $\rho$ will be \true iff $x$ is the root of $T$. The sets $X_i$ will contain the following information about the guessed run $r$: For every pebble placement $\bar{y}$ with $\mpp{\bar{y}}=i$, the set $X_i$ will contain a tuple\footnote{Note that $X_0$ will always contain exactly one tuple as there is only one pebble placement $\bar{y}$ with $\mpp{\bar{y}}=0$.} $(S,uS,dS,rS,uP,dP,auP,adP,rdP,C,aC)$ as described in the following.
\begin{align*}
S & := \{q\in Q_{\aA}\mid (q,x,\bar{y})\text{ occurs in }r\}\\
uS & := \{q\in Q_{\aA}\mid (q,x\cdot -1,\bar{y})\text{ occurs in }r\text{ as child of a node }(q',x,\bar{y})\}\\
dS & := \{q\in Q_{\aA}\mid (q,x,\bar{y})\text{ occurs in }r\text{ as child of a node }(q',x\cdot -1,\bar{y})\}
\end{align*}

The latter two sets are used to check consistency of guesses of \aB. As mentioned above, \aB has to check that the guessed run $r$ is accepting. To this end, we distinguish two kinds of infinite paths in $r$: Those that go down the tree arbitrary far can be handled by the acceptance condition of \aB as described below. But there might be also \emph{bounded} paths, i.e., paths that contain only nodes up to a certain depth. These paths necessarily end up in a loop. \aB has to check that these loops contain a state from $G$ or do not contain a state from $B$, depending on whether they consist of existential or universal states. This will be done at those nodes $x$, such that the loop contains a configuration $(q,x,\bar{y})$ but no configuration with a strict descendant of $x$, except if an additional (compared to $\bar{y}$) pebble was placed before. To perform these checks, \aB maintains information about the following three kinds of paths in $r$.

An \emph{upward path} is a path in $r$ starting from a configuration $(p,x\cdot -1,\bar{y})$ and ending in a configuration $(q,x,\bar{y})$ such that there is no intermediate configuration $(p'x',\bar{y})$ where $x\leq x'$, and there is no intermediate configuration $(p'x',\bar{y}')$ with $\mpp{\bar{y}'}<\mpp{\bar{y}}$. Note that intermediate configurations $(p'x',\bar{y}')$ with $\mpp{\bar{y}'}>\mpp{\bar{y}}$ and $x\leq x'$ are allowed.
$$uP:=\{(p,q)\in (Q_{\aA})^2\mid\text{there is an upward path from }(p,x\cdot -1,\bar{y})\text{ to }(q,x,\bar{y})\}$$

A \emph{downward path} is a path starting from a configuration $(p,x,\bar{y})$ and ending in a configuration $(q,x\cdot -1,\bar{y})$, without an intermediate configuration $(p'x',\bar{y}')$ with $x\not\leq x'$. In particular, the pebbles that are placed at the beginning of a downward path will not be lifted along this downward path as they are placed at $x\cdot -1$ or above.
$$dP:=\{(p,q)\in (Q_{\aA})^2\mid\text{there is a downward path from }(p,x,\bar{y})\text{ to }(q,x\cdot -1,\bar{y})\}$$

As the third kind of paths in $r$, we consider paths that start in a configuration where the last pebble is placed at the current node and that end by lifting this pebble. If either $\mpp{\bar{y}}=0$ or $\mpp{\bar{y}}\geq 1$ and $y_i\neq x$, with $\bar{y}=(y_1,\ldots,y_i,\pnp,\ldots,\pnp)$, then $C=\emptyset$. Otherwise, if $y_i=x$, $C$ is the set of all tuples $(p,q)\in(Q_{\aA})^2$ such that there is a path in $r$ starting from $(p,x,\bar{y})$ and ending at $(q,x,(y_1,\ldots,y_{i-1},\pnp,\ldots,\pnp))$ where $\mpp{\bar{y}'}$ holds for all intermediate configurations $(p',x',\bar{y}')$.

Besides information about the existence of paths of the three kinds described above, \aB also needs information about whether they contain states from $G_{\aA}$ or $B_{\aA}$.
 Let $auP$ be the set of all tuples of states $(p,q)\in (Q_{\aA})^2$ such that there is an upward path in $r$ from $(p,x\cdot -1,\bar{y})$ to $(q,x,\bar{y})$ and 
\begin{itemize}
	\item either $p$ and $q$ belong to the same existential set $Q_j$ and all upward paths from $(p,x\cdot -1,\bar{y})$ to $(q,x,\bar{y})$ contain a state from $G_{\aA}$, or
	\item $p$ and $q$ belong to the same universal set $Q_j$ and there is a upward path from $(p,x\cdot -1,\bar{y})$ to $(q,x,\bar{y})$ that contains a state from $B_{\aA}$.
\end{itemize}
The sets $adP$ and $aC$ are defined analogously.

Whenever \aB guessed some downward path, it has to make sure that this path really exists. But the existence of a downward path at a node $x$ might depend on the existence of other downward paths at the children of $x$. A downward path can thus be seen as a request generating other requests. \aB has to check that this process terminates.

This can be done along with the handling of the first part of acceptance condition for unbounded paths using an interval-technique. For path $\pi$ in $T$ and every node $x$ on $\pi$, we define $w_{\pi}(x)$ to be the minimal node $w_{\pi}(x)>x$ on $\pi$ such that
\begin{enumerate}
	\item for each state $q\in S\cap Q_l$ for some existential set $Q_l$, each subpath of $r$ starting from a node with configuration $q,x,\bar{y}$ and reaching a configuration $(p,w_{\pi}(x),\bar{y})$ with $p\in Q_l$ visits some state from $G$, and
	\item for each $(p,q)\in dP$ there is a downward path in $r$ from $(p,x,\bar{y})$ to $(q,x\cdot -1,\bar{y})$ on which no descendant of $w_{\pi}(x)$ is visited,
\end{enumerate}
where the sets $S$ and $dP$ refer to the tuple corresponding to the pebble placement $\bar{y}$ in the state taken by \aB at $x$.
Now, for every infinite path $\pi$ in $T$, we define an infinite increasing sequence $x_0,x_1,\ldots$ of nodes on $\pi$ as follows: $x_0:=\varepsilon$ and given $x_j$, $x_{j+1}:=w_{\pi}(x_j)$. Using K\"onig's Lemma, we can easily prove that such a sequence exists if and only if $r$ fulfills the first part of the acceptance condition, i.e., the one with respect to the set $G_{\aA}$. The sets $rS$ and $rdP$ are used by \aB to check these conditions. They get initialized at the root and are updated at every transition. As soon as all requests have been fulfilled, they get reinitialized. The states in $G_{\aB}$ will be those where all these sets are empty.

The other part of the acceptance condition, i.e., the one with respect to the universal states and the set $B_{\aA}$, can be checked by \aB using the sets $S$ and $adP$. 

Please note that while the number of states of \aB is doubly exponential in the number of states of \aA, both automata use the same set $D$.

We give the formal definition of $\aB=(Q_{\aB},\Sigma,D,Q^0_{\aB},\delta_{\aB},\{\langle G_{\aB},B_{\aB}\rangle\})$. For sake of presentation, we use a set of initial states $Q^0_{\aB}$ instead of a single state.

The set $Q_{\aB}$ of states of \aB consists of all tuples $(X_0,X_1,\ldots,X_k,\rho)$, where for every $0\leq i\leq k$, $X_i$ is a set of tuples $(S,uS,dS,rS,uP,dP,auP,adP,rdP,C,aC)$ from $(2^{Q_{\aA}})^5\times (2^{Q_{\aA}\times Q_{\aA}})^7$ and $\rho\in\mathbb{B}$, such that
\begin{itemize}
	\item $X_0$ contains exactly one tuple $(S,uS,dS,rS,uP,dP,auP,adP,rdP,C,aC)$ and the sets $dP,adP,rdP,C$ and $aC$ are empty  in this tuple,
	\item for every tuple from every set $X_i$, $i\geq 0$: $dS\subseteq S$, $rS\subseteq S$, $auP\subseteq uP$, $adP\subseteq dP$, and $rdP\subseteq dP$.
\end{itemize}

The initial states of \aB are those states where $\rho=\true$ and
\begin{itemize}
	\item $dS=\{q^0_{\aA}\}$, and therefore $q^0_{\aA}\in S$, for the sets $S,dS$ from the tuple in $X_0$,
	\item for every tuple from every set $X_i$, $i\geq 0$: $uS=\emptyset$, $rS=S$, $uP=\emptyset$, $dS=\emptyset$, and $dP=\emptyset$.
\end{itemize}
Note that this implies that the sets $auP,adP$ and $rdP$ are empty as well. The first condition is used to justify that $q^0_{\aA}\in S$, avoiding a special case in the definition of the transition function below.

For the acceptance condition of \aB, we have to define the two sets $G_{\aB}$ and $B_{\aB}$. The set $G_{\aB}$ consists of all those states, where, for every tuple from every set $X_i$, $rS=\emptyset$ and $rdPS=\emptyset$. In particular, the state where every set $X_i$ contains only one tuple consists only of empty sets is accepting. This state is taken by \aB at all nodes of $T$ not visited by \aA during the run guessed by \aB.
A state is contained in $B_{\aB}$, if for some tuple from some set $X_i$, $S\cap B_{\aA}\neq\emptyset$ or $adP$ contains a tuple $(p,q)$ with $p,q\in Q_j$ for some universal set $Q_j$.

For every $d\in D$, let $c:=(X_0,\ldots,X_k,\rho)$ and $c_j:=(X_0^j,\ldots,X_k^j,\rho_j)$ for $1\leq j\leq d$ be states of \aB. We define the transition function $\delta_{\aB}$ for all $d\in D$ and all $\sigma\in\Sigma$ as follows:

$(c_1,\ldots,c_d)\in\delta_{\aB}(c,\sigma,d):\iff$
\begin{enumerate}[(i)]
	\item for every $i\in[0,k]$ and every $j\in[1,d]$, there is a relation $X_i\sim_{i,j}X_i^j$ such that $\forall t\in X_i\exists t_j\in X_i^j:t\sim_{i,j}t_j$ and $\forall t_j\in X_i^j\exists t\in X_i:t\sim_{i,j}t_j$, 
	\item for every $i\in[1,k]$, there is one tuple $t_i^c\in X_i$, one tuple $l(t_i^c)\in X_{i-1}$, and for every $j\in[1,d]$ one tuple $t_{i,j}^c\in X_i^j$ with $t_i^c\sim_{i,j}t^c_{i,j}$, and
	\item for every tuple $t=(S,\ldots)$ from some set $X_i$ and every state $q\in S$ such that $\delta_{\aA}(q,\sigma,d,b)=\alpha$ for some Boolean combination $\alpha$, where $b=\true$ if and only if $t=t^c_i$, there exists a set $Y_q^t\subseteq(D\cup\{-1,0,\text{root}\})\times Q_{\aA}$ that exactly satisfies $\alpha$,
\end{enumerate}
such that the conditions (1) to (19) given below are satisfied.

The sets $Y_q^t$ represent the guess of \aB on how \aA acts at the current node. Here we use that we only consider homogeneous runs. The relations $\sim_{i,j}$ establish a connection between tuples referring to the same pebble position. Finally, the tuples $t^c$ from (ii) are those referring to the situation where the last placed pebble is placed at the current node and $l(t^c_i)$ refers to the situation reached from the one corresponding to $t^c_i$ by lifting the last pebble. Note that $l(t^c_{i+1})=t^c_i$ if and only if $t^c_{i+1}$ refers to the situation where both pebbles, $i$ and $i+1$, are placed at the current node.

In the following conditions on the transition function, we use $S^c_{i}$ to denote the first set in the tuple $t^c_{i}$ and analogous notation for the other sets from the tuples $t^c_i$ and $t^c_{i,j}$.

The first condition states that after applying a transition, \aB is not at the root of $T$ anymore. This ensures that there is exactly one state $(X_0,\ldots,X_k,\rho)$ with $\rho=\true$ in any run of \aB: the initial state.
\begin{enumerate}
	\item[(1)] For all $j\in[1,d]$: $\rho_j=\false$.
\end{enumerate}

The next six conditions assure that the states $c,c_1,\ldots,c_d$ are consistent with the transition function and that every $c_i$ is consistent with $c$.
\begin{enumerate}
	\item[(2)] For all states $q\in Q_{\aA}$
	\begin{itemize}
		\item if $\delta_{\aA}(q,\sigma,d,b)=(\text{drop},p)$, then
		\begin{itemize}
			\item if $q\in S$ for some tuple $t=(S,\ldots)\in X_i$ for some $i\in[0,k-1]$, then $p\in S^c_{i+1}$ and $t=l(t^c_{i+1})$, and
			\item $q\not\in S$ for all tuples $(S,\ldots)\in X_k$,
		\end{itemize}
		\item if $\delta_{\aA}(q,\sigma,d,b)=(\text{lift},p)$, then
		\begin{itemize}
			\item $q\not\in S$ for all tuples $(S,\ldots)\in X_0$,
			\item if $q\in S^c_i$ for some tuple $t^c_i$ with $i\in[1,k]$, then $p\in l(t^c_i)$, and
			\item for every $i\in[1,k]$ and every $t=(S,\ldots)\in X_i\setminus\{t^c_i\}$, $q\not\in S$,
		\end{itemize}
		\item if $\delta_{\aA}(q,\sigma,d,b)=\alpha$ and $q\in S$ for some tuple $t=(S,uS,\ldots)\in X_i$ with $i\in[0,k]$, then for all $j\in[1,d]$, all tuples $t_j=(\ldots,dS_j,\ldots)\in X_i^j$ with $t\sim_{i,j} t_j$, and all $(e,p)\in Y^t_q$
		\begin{itemize}
			\item if $e=-1$, then $p\in uS$,
			\item if $e=0$, then $p\in S$, 
			\item if $e\geq 1$, then $p\in dS_e$, and
			\item if $e=\text{root}$, then $\rho=\true$ and $p\in S$.
		\end{itemize}
	\end{itemize}
\end{enumerate}
\begin{enumerate}
	\item[(3)] For every $i\in[0,k]$ and every $j\in[1,d]$ and for all tuples $t=(S,\ldots)\in X_i$ and $t_j=(S_j,uS_j,\ldots)\in X_i^j$ with $t\sim_{i,j} t_j$: $uS_j\subseteq S$.
\end{enumerate}
\begin{enumerate}
	\item[(4)] For every state $q\in S$ for some tuple $t=(S,uS,,dS\ldots)\in X_i$ with $i\in[0,k]$, there is a sequence of pairwise disjoint states $q_1,\ldots,q_l$ from $S$, such that 
	\begin{itemize}
	\item $q_l=q$, 
		\item one of the following conditions holds for $q_1$
		\begin{itemize}
			\item $q_1\in dS$,
			\item $q_1\in uS_j$ for some $j\in[1,d]$ and some tuple $t_j=(S_j,uS_j,\ldots)\in X_i^j$ with $t\sim_{i,j}t_j$, or
			\item $i\in[1,k]$, $t=t^c_i$, and there is a state $p\in S'$ with $l(t)=(S',\ldots)\in X_{i-1}$, such that $\delta_{\aA}(p,\sigma,d,b)=(q_1,\text{drop})$, where $b=\true$ if and only if $l(t)=t^c_{i-1}$, or
			\item $i\in[0,k-1]$, $t=l(t^c_{i+1})$, and there is a state $p\in S'$ with $t^c_{i+1}=(S',\ldots)$, such that $\delta_{\aA}(p,\sigma,d,\true)=(q_1,\text{lift})$,
		\end{itemize}
		\item and for every $j$, $1\leq j< l$, $\delta_{\aA}(q_j,\sigma,d,b)=\alpha$ for some Boolean combination $\alpha$ and $(0,q_{j+1})\in Y_{q_j}$, where $b=\true$ if and only if $t=t^c_{i}$.
	\end{itemize}
\end{enumerate}
\begin{enumerate}
	\item[(5)] For all tuples $t=(S,uS,\ldots)\in X_i$ with $i\in[0,k]$ and all states $q\in uS$, there exists a state $p\in S$ such that $(-1,q)\in Y_p^t$.
\end{enumerate}
\begin{enumerate}
	\item[(6)]For every $j\in[1,d]$ and every tuple $t_j=(S_j,uS_j,dS_j,\ldots)\in X_i^j$ with $i\in[0,k]$, there exists a state $p\in S$ such that $(j,q)\in Y_p^t$, where $t=(S,\ldots)\in X_i$ with $t\sim_{i,j}t_j$.
\end{enumerate}

Conditions (7) to (9) check consistency of the information about downward paths. These conditions also propagate request for downward paths. Consistency of the sets $C$ and $aC$ is checked by conditions (10) to (12). 

The conditions (7) to (12) have to hold with respect to every tuple $t=(S,uS,dS,rS,uP,dP,auP,adP,rdP,C,aC)\in X_i$ for all $i\in[0,k]$. Let $t_j=(S_j,uS_j,\ldots,aC_j)\in X_i^j$ for all $j\in[1,d]$ denote those tuples such that $t\sim_{i,j}t_j$.
\begin{enumerate}
	\item[(7)] For all $(p,q)\in dP$, it holds that $p\in S$, $q\in uS$, and there is a sequence of pairwise disjoint states $q_1,\ldots,q_l$, $l\geq 1$, from $S$, such that $q_1=p$, $(-1,q)\in Y_{q_l}^t$ and for each $h\in[1,l-1]$ one of the following conditions holds:
		\begin{itemize}
			\item $(0,q_{h+1})\in Y_{q_h}^t$,
			\item $i<k$ and there is a state $p'\in S'$ with $t^c_{i+1}=(S',\ldots,C',aC')\in X_{i+1}$ and $l(t^c_{i+1})=t$, such that $\delta_{\aA}(q_h,\sigma,d,b)=(\text{drop},p')$ and $(p',q_{h+1})\in C'$, where $b=\true$ if and only if $t=t^c_i$, or
			\item there is a $j\in[1,d]$ and a state $p'\in dS_j$ such that $(j,p')\in Y_{q_h}^t$ and $(p',q_{h+1})\in dP_j$, and if $(p,q)\in rdP$, then $(p',q_{h+1})\in rdP_j$.
		\end{itemize}
\end{enumerate}
\begin{enumerate}
		\item[(8)] For all $(p,q)\in dP$ such that $p$ and $q$ belong to the same universal set: If there is a sequence of states $q_1,\ldots,q_l$ as specified in condition (7), such that one of the states $q_1,\ldots,q_l,q$ is in $B_{\aB}$,  one of the tuples $(p',q_{h+1})\in C'$ is in $aC'$, or one of the tuples $(p',q_{h+1})\in dP_j$ is in $adP_j$, then $(p,q)\in adP$.
\end{enumerate}
\begin{enumerate}
	\item[(9)] For all $(p,q)\in adP$ such that $p$ and $q$ belong to some existential set: $p\in G_{\aA}$, $q\in G_{\aA}$, or the following condition holds: For all sequences of pairwise disjoint states $q_1,\ldots,q_l$ from $S$ such that $q_1=p$ and $(-1,q)\in Y_{q_l}^t$, and
 all sequences $(d_1,p_1),\ldots,(d_{l-1},p_{l-1})$, where $(d_h,p_h)\in (D\cup\{\text{drop},0\})\times Q_{\aA}$ and for all $h\in[1,l]$
		\begin{itemize}
			\item $(d_h,p_h)\in Y_{q_h}^t$, $d_h=0$, and $p_h=q_{h+1}$,
			\item $(d_h,p_h)\in Y_{q_h}^t$, $d_h\geq 1$, and $(p_h,q_{h+1})\in dP_{d_h}$, or
			\item $d_h=\text{drop}$, $p_h\in S'$ with $t^c_{i+1}=(S',\ldots,C',aC')\in X_{i+1}$ and $l(t^c_{i+1})=t$, $\delta(q_h,\sigma,d,b)=(\text{drop},p')$, where $b=\true$ if and only if $t=t^c_i$, and  $(p_h,q_{h+1})\in C'$,
		\end{itemize}
	there is a $h\in[1,l]$ such that
	\begin{itemize}
		\item $q_j\in G_{\aA}$,
		\item $h<l$ and $(p_h,q_{h+1})\in adP_{d_h}$, or
		\item $h<l$ and $(p_h,q_{h+1})\in aC'$.
	\end{itemize}
\end{enumerate}
\begin{enumerate}
	\item[(10)] If $i=0$ or $t\neq t^c$, then $C=\emptyset$. Else, for all $(p,q)\in C$, it holds that $p\in S$, $q\in S'$ with $l(t)=(S',\ldots)$, and there is a sequence of pairwise disjoint states $q_1,\ldots,q_l$, $l\geq 1$, from $S$, such that $q_1=p$, $\delta_{\aA}(q_h,\sigma,d,\true)=(\text{lift},q)$, and for each $h\in[1,l-1]$ one of the following conditions holds:
		\begin{itemize}
			\item $(0,q_{h+1})\in Y_{q_h}^t$,
			\item $(\text{root},q_{h+1})\in Y^t_{q_h}$ and $\rho=\true$,
			\item $i<k$ and there is a state $p'\in S'$ with $t^c_{i+1}=(S',\ldots,C',aC')\in X_{i+1}$ and $l(t^c_{i+1})=t$, such that $\delta_{\aA}(q_h,\sigma,d,\true)=(\text{drop},p')$ and $(p',q_{h+1})\in C'$, 
			\item there is a state $p'\in uS$ such that $(-1,p')\in Y^t_{q_h}$ and $(p',q_{h+1})\in uP$, or
			\item there is a $j\in[1,d]$ and a state $p'\in dS_j$ such that $(j,p')\in Y_{q_h}^t$ and $(p',q_{h+1})\in dP_j$.
		\end{itemize}
\end{enumerate}
\begin{enumerate}
	\item[(11)] For all $(p,q)\in C$ such that $p$ and $q$ belong to the same universal set: If there is a sequence of states $q_1,\ldots,q_l$ as specified in condition (10), such that one of the states $q_1,\ldots,q_l,q$ is in $B_{\aB}$,  one of the tuples $(p',q_{h+1})\in C'$ is in $aC'$, one of the tuples $(p',q_{h+1})\in uP$ is in $auP$, or one of the tuples $(p',q_{h+1})\in dP_j$ is in $adP_j$, then $(p,q)\in aC$.
\end{enumerate}
\begin{enumerate}
	\item[(12)] For all $(p,q)\in aC$ such that $p$ and $q$ belong to some existential set: $p\in G_{\aA}$, $q\in G_{\aA}$, or the following condition holds: For all sequences of pairwise disjoint states $q_1,\ldots,q_l$ from $S$ such that $q_1=p$ and $\delta_{\aA}(q_h,\sigma,d,\true)=\text{lift},q)$, and
 all sequences $(d_1,p_1),\ldots,(d_{l-1},p_{l-1})$, where $(d_h,p_h)\in (D\cup\{\text{drop},0,-1,\text{root}\})\times Q_{\aA}$ and for all $h\in[1,l]$
		\begin{itemize}
			\item $(d_h,p_h)\in Y_{q_h}^t$, $d_h=0$ and $p_h=q_{h+1}$,
			\item $(d_h,p_h)\in Y_{q_h}^t$, $\rho=\true$, $d_h=\text{root}$, and $p_h=q_{h+1}$,
			\item $(d_h,p_h)\in Y_{q_h}^t$, $d_h= -1$ and $(p_h,q_{h+1})\in uP$, 
			\item $(d_h,p_h)\in Y_{q_h}^t$, $d_h\geq 1$ and $(p_h,q_{h+1})\in dP_{d_h}$, or
			\item $d_h=\text{drop}$, $p_h\in S'$ with $t^c_{i+1}=(S',\ldots,C',aC')\in X_{i+1}$ and $l(t^c_{i+1})=t$, $\delta(q_h,\sigma,d,b)=(\text{drop},p')$, where $b=\true$ if and only if $t=t^c_i$, and  $(p_h,q_{h+1})\in C'$,
		\end{itemize}
	there is a $h\in[1,l]$ such that
	\begin{itemize}
		\item $q_j\in G_{\aA}$,
		\item $h<l$ and $(p_h,q_{h+1})\in auP$,
		\item $h<l$ and $(p_h,q_{h+1})\in adP_{d_h}$, or
		\item $h<l$ and $(p_h,q_{h+1})\in aC'$.
	\end{itemize}
\end{enumerate}

The following three conditions check consistency of the information about upward paths. Opposed to the preceding conditions, this information is checked for the states $c_i$ using the information at $c$.
For every $j\in[1,d]$, every $i\in[0,k]$, and every tuple $t_j=(S_j,uS_j,dS_j,rS_j,uP_j,dP_j,auP_j,adP_j,rdP_j,C_j,aC_j)\in X_i^j$, there is a tuple $t=(S,uS,dS,rS,uP,dP,auP,adP,rdP,C,aC)\in X_i$ with $t\sim_{i,j}t_j$, such that the conditions (13) to (15) are satisfied.
\begin{enumerate}
	\item[(13)] For every tuple $(p,q)\in uP_j$, there is a sequence of states $q_1,\ldots,q_l$ from $S$, such that $q_1=p$, $(j,q)\in Y_{q_l}^t$, and for every $h\in[1,l]$
	\begin{itemize}
		\item $(0,q_{h+1})\in Y^t_{q_h}$,
		\item $(\text{root},q_{h+1})\in Y^t_{q_h}$ and $\rho=\true$,
		\item there is a state $p'\in uS$ such that $(-1,p')\in Y^t_{q_h}$ and $(p',q_{h+1})\in uP$, or
		\item $i<k$ and there is a state $p'\in S'$ with $t^c_{i+1}=(S',\ldots,C',aC')\in X_{i+1}$ and $l(t^c_{i+1})=t$, such that $\delta_{\aA}(q_h,\sigma,d,b)=(\text{drop},p')$ and $(p',q_{h+1})\in C'$, where $b=\true$ if and only if $t=t^c_i$.
	\end{itemize}
\end{enumerate}
\begin{enumerate}
	\item[(14)] For all $(p,q)\in uP_j$ such that $p$ and $q$ belong to the same universal set: If there is a sequence of states $q_1,\ldots,q_l$ as specified in condition (13), such that one of the states $q_1,\ldots,q_l,q$ is in $B_{\aB}$,  one of the tuples $(p',q_{h+1})\in C'$ is in $aC'$, or one of the tuples $(p',q_{h+1})\in uP$ is in $auP$, then $(p,q)\in auP_j$.
\end{enumerate}
	
\begin{enumerate}
	\item[(15)] For all $(p,q)\in auP_j$ such that $p$ and $q$ belong to some existential set: $p\in G_{\aA}$, $q\in G_{\aA}$, or the following condition holds: For all sequences of pairwise disjoint states $q_1,\ldots,q_l$ from $S$ such that $q_1=p$ and $(j,q)\in Y_{q_l}^t$, and
 all sequences $(d_1,p_1),\ldots,(d_{l-1},p_{l-1})$, where $(d_h,p_h)\in \{\text{drop},0,-1,\text{root}\}\times Q_{\aA}$ and for all $h\in[1,l]$
		\begin{itemize}
			\item $(d_h,p_h)\in Y_{q_h}^t$, $d_h=0$, and $p_h=q_{h+1}$,
			\item $(d_h,p_h)\in Y_{q_h}^t$, $\rho=\true$, $d_h=\text{root}$, and $p_h=q_{h+1}$,
			\item $(d_h,p_h)\in Y_{q_h}^t$, $d_h= -1$, and $(p_h,q_{h+1})\in uP$, or
			\item $d_h=\text{drop}$, $p_h\in S'$ with $t^c_{i+1}=(S',\ldots,C',aC')\in X_{i+1}$ and $l(t^c_{i+1})=t$, $\delta(q_h,\sigma,d,b)=(\text{drop},p')$, where $b=\true$ if and only if $t=t^c_i$, and  $(p_h,q_{h+1})\in C'$,
		\end{itemize}
	there is a $h\in[1,l]$ such that
	\begin{itemize}
		\item $q_j\in G_{\aA}$,
		\item $h<l$ and $(p_h,q_{h+1})\in auP$, or
		\item $h<l$ and $(p_h,q_{h+1})\in aC'$.
	\end{itemize}
\end{enumerate}
	
Conditions (16) and (17) are used to check that every bounded path in $r$ is accepting. As described above, we check that every \emph{upward loop} contains a state $G_{\aA}$ if all states in the loop are existential, and that it does not contain a state from $B_{\aA}$ if they are universal.
\begin{enumerate}
	\item[(16)] For every tuple $t=(S,uS,dS,rS,uP,dP,auP,adP,rdP,C,aC)$ with $t\in X_i$ for some $i\in[0,k]$, and every existential state $q\in S$: If there is a sequence of states $q_1,\ldots,q_l$ from $S$ with $q_1=q_l=q$ and a sequence of tuples $(d_1,p_1),\ldots,(d_{l-1},p_{l-1})$ with $(d_h,p_h)\in \{\text{drop},0,-1,\text{root}\}\times Q_{\aA}$, such that for all $h\in[1,l]$
		\begin{itemize}
			\item $(d_h,p_h)\in Y_{q_h}^t$, $d_h=0$ and $p_h=q_{h+1}$,
			\item $(d_h,p_h)\in Y_{q_h}^t$, $\rho=\true$, $d_h=\text{root}$, and $p_h=q_{h+1}$,
			\item $(d_h,p_h)\in Y_{q_h}^t$, $d_h= -1$ and $(p_h,q_{h+1})\in uP$, or
			\item $d_h=\text{drop}$, $p_h\in S'$ with $t^c_{i+1}=(S',\ldots,C',aC')\in X_{i+1}$ and $l(t^c_{i+1})=t$, $\delta(q_h,\sigma,d,b)=(p',\text{drop})$, where $b=\true$ if and only if $t=t^c_i$, and  $(p_h,q_{h+1})\in C'$,
		\end{itemize}
	then there is a $h\in[1,l]$ such that
	\begin{itemize}
		\item $q_j\in G_{\aA}$,
		\item $h<l$ and $(p_h,q_{h+1})\in auP$, or
		\item $h<l$ and $(p_h,q_{h+1})\in aC'$.
	\end{itemize}
\end{enumerate}
\begin{enumerate}
	\item[(17)] For every tuple $t=(S,uS,dS,rS,uP,dP,auP,adP,rdP,C,aC)$ with $t\in X_i$ for some $i\in[0,k]$, and every universal state $q\in S$: For all sequences of states $q_1,\ldots,q_l$ from $S$ with $q_1=q_l=q$ and
 all sequences $(d_1,p_1),\ldots,(d_{l-1},p_{l-1})$, where $(d_h,p_h)\in \{\text{drop},0,-1,\text{root}\}\times Q_{\aA}$ and for all $h\in[1,l]$
		\begin{itemize}
			\item $(d_h,p_h)\in Y_{q_h}^t$, $d_h=0$ and $p_h=q_{h+1}$,
			\item $(d_h,p_h)\in Y_{q_h}^t$, $\rho=\true$, $d_h=\text{root}$, and $p_h=q_{h+1}$,
			\item $(d_h,p_h)\in Y_{q_h}^t$, $d_h= -1$ and $(p_h,q_{h+1})\in uP$, or
			\item $d_h=\text{drop}$, $p_h\in S'$ with $t^c_{i+1}=(S',\ldots,C',aC')\in X_{i+1}$ and $l(t^c_{i+1})=t$, $\delta(q_h,\sigma,d,b)=(p',\text{drop})$, where $b=\true$ if and only if $t=t^c_i$, and  $(p_h,q_{h+1})\in C'$,
		\end{itemize}
	the following conditions hold for every $h\in[1,l]$:
	\begin{itemize}
		\item $q_j\not\in B_{\aA}$,
		\item if $h<l$, then $(p_h,q_{h+1})\not\in auP$, and
		\item if $h<l$, then $(p_h,q_{h+1})\not\in aC'$.
	\end{itemize}
\end{enumerate}

The next condition updates the \emph{request sets} used to check the acceptance of unbounded paths in $r$. This condition has to hold with respect to every tuple $t=(S,uS,dS,rS,uP,dP,auP,adP,rdP,C,aC)\in X_i$ for all $i\in[0,k]$. Let $t_j=(S_j,uS_j,dS_j,rS_j\ldots,aC_j)\in X_i^j$ for all $j\in[1,d]$ denote those tuples such that $t\sim_{i,j}t_j$.
\begin{enumerate}
	\item[(18)] For all existential states $p\in rS$, all $j\in[1,d]$, and all states $q\in dS_j$, if there is a sequence of pairwise disjoint states $q_1,\ldots,q_l$ from $S$ such that
	\begin{itemize}
		\item $q_1=p$, 
		\item $(j,q)\in Y_{q_l}^t$,
		\item $q\not\in G_{\aA}$, $q_h\not\in G_{\aA}$ for $h\in[1,l]$, and
		\item for all $h\in[1,l-1]$,
		\begin{itemize}
			\item $(0,q_{h+1})\in Y_{q_j}^t$,
			\item $(\text{root},q_{h+1})\in Y^t_{q_h}$ and $\rho=\true$,
			\item there is a state $p'\in uS$ such that $(-1,p')\in Y_{q_h}^t$ and $(p',q_{h+1})\in uP\setminus auP$, or
		\item $i<k$ and there is a state $p'\in S'$ with $t^c_{i+1}=(S',\ldots,C',aC')\in X_{i+1}$ and $l(t^c_{i+1})=t$, such that $\delta_{\aA}(q_h,\sigma,d,b)=(\text{drop},p')$ and $(p',q_{h+1})\in C'\setminus aC'$, where $b=\true$ if and only if $t=t^c_i$,
		\end{itemize}
	\end{itemize}
	then $q\in rS_j$.
\end{enumerate}

The last condition reinitializes the \emph{request sets} after all previous requests have been fulfilled.
\begin{enumerate}
	\item[(19)] If, for every tuple $(S,uS,dS,rS,uP,dP,auP,adP,rdP,C,aC)\in X_i$ with $i\in[0,k]$, $rS=\emptyset$ and $rdP=\emptyset$, then for all $j\in[1,d]$, all $i\in[0,k]$, and all tuples $(S_j,uS_j,dS_j,rS_j,uP_j,dP_j,auP_j,adP_j,rdP_j,C_j,aC_j)\in X_i^j$
	\begin{itemize}
		\item $rS_j=S_j\setminus G_{\aA}$ and
		\item $rdP_j=dP_j$.
	\end{itemize}
\end{enumerate}
This concludes our construction.

To prove correctness of our construction, assume that $r$ is an accepting run of \aA on a tree $T$. Obviously, we get an accepting run for \aB on $T$ by choosing those states that represent the run $r$ as indicated above.

For the reverse direction, assume that there is an accepting run of \aB on $T$. For every node $x$ of $T$, there is a possible behavior of \aA at $x$ (the sets $Y_q^t$ above) that is consistent with the run of \aB. From these, starting with the initial state of \aA at the root of $T$, we can construct a run for \aA and show that this run is accepting. The details are basically a reproduction of the construction given above.

\section{Proof of Proposition \ref{prop:bounded_branching}}
Let $T\in L(\aA)$ be a tree and $r$ a homogeneous accepting run of \aA on $T$. We show how to obtain from $T$ a tree $T'$, such that the branching degree of $T'$ is at most $2^{n^2\cdot(k+1)}$ and $T'\in L(\aA)$.

Let $x$ be some node of $T$ and $T_x$ the subtree rooted at $x$. Assume that \aA enters $T_x$ in a state $q$ during the run $r$ while having placed $i$ pebbles outside of $T_x$. I.e., we consider a configuration $c:=(q,x,\bar{y})$ with $\mpp{\bar{y}}=i$ and $y_{\mpp{\bar{y}}}\neq x$. Every path starting from $c$ either stays in $T_x$ or leaves $T_x$ in taking some state $p$ at the parent node of $x$. Put into the notation used in the proof of Lemma \ref{lem:automata_translation}: There is a downward path from $q$ to $p$. 

Let $f_i^x:Q\rightarrow 2^Q$ be the mapping assigning to every state $q$ the set of states in which $T_x$ is left when entered in $q$ with $k-i$ pebbles in hand.
Note that $T_x$ might be entered many times during $r$. But since $r$ is homogeneous, the resulting sets of states are always the same, i.e., the functions $f_i^x$ for $i\in[0,k]$ are well defined and describe how \aA behaves on $T_x$. We therefore call $f_0^x,\ldots,f_k^x$ the type of $x$.
As the size of a type is $n^2\cdot(k+1)$, there are at most $2^{n^2\cdot(k+1)}$ different types.

Now assume that there are two sibling nodes $x$ and $y$ in $T$ that have the same type. Let $T'$ be the tree obtained from $T$ by removing $T_y$. Then we can get an accepting homogeneous run $r'$ of \aA on $T'$ by moving into $T_x$ whenever going into $T_y$ in $r$. Note that this refers to $\mydiamond$-moves and uses that $r$ is accepting, i.e., that every infinite path of $r$ in $T_x$ and $T_y$ is accepting.

Applying this procedure level-wise top-down results in a tree with branching degree bounded by $2^{n^2\cdot(k+1)}$ that is accepted by \aA.

\end{document}